%% file: Tempo.tex
\documentclass{easychair}   
\usepackage[T1]{fontenc}
\usepackage[utf8]{inputenc}
\usepackage{amsmath,amssymb}
\usepackage{mathtools}
\usepackage{stmaryrd}   
\usepackage{siunitx}
\usepackage{paperstyle}
\usepackage{tikz}
\usetikzlibrary{arrows.meta}
\usetikzlibrary{arrows.meta,decorations.pathreplacing}
\usetikzlibrary{positioning}
\usetikzlibrary{fit,backgrounds}          
\colorlet{cNegBg}{cyan!8}                 
\colorlet{cRtBg}{violet!8}                
\newcommand{\reff}[1]{{\scriptsize\textcolor{black!50}{#1}}}

\usepackage{multirow}
\usepackage{enumitem}
\usepackage{xcolor}         
\usepackage{tabularx}
\usepackage{doi}
\usepackage{amsthm}
\usepackage[capitalise,noabbrev]{cleveref}   
\theoremstyle{plain}
\usepackage{aliascnt}
\newtheorem{theorem}{Theorem}[section]
\newaliascnt{example}{theorem}
\newtheorem{example}[example]{Example}
\aliascntresetthe{example}
\crefname{example}{Example}{Examples}
\newaliascnt{lemma}{theorem}
\newtheorem{lemma}[lemma]{Lemma}
\aliascntresetthe{lemma}
\newaliascnt{proposition}{theorem}
\newtheorem{proposition}[proposition]{Proposition}
\aliascntresetthe{proposition}
\newaliascnt{corollary}{theorem}
\newtheorem{corollary}[corollary]{Corollary}
\aliascntresetthe{corollary}
\theoremstyle{definition}
\newaliascnt{definition}{theorem}
\newtheorem{definition}[definition]{Definition}
\aliascntresetthe{definition}
\newaliascnt{remark}{theorem}

\aliascntresetthe{remark}
\newaliascnt{assumption}{theorem}

\aliascntresetthe{assumption}

\newcommand{\Inst}{\mathsf{Inst}}

\newcommand{\Dur}{\mathsf{Dur}}
\newcommand{\timeT}{\mathbb{T}}
\newcommand{\durD}{\mathbb{D}}

\newcommand{\sortof}{\operatorname{sort}}
\newcommand{\Per}{\mathsf{Per}}
\newcommand{\ctx}{\rho}
\newcommand{\now}{\mathit{now}}
\newcommand{\trig}{\mathit{trig}}
\newcommand{\dstart}{\mathit{start}}

\newcommand{\diff}{\delta}

\newcommand{\tier}{\operatorname{tier}}
\newcommand{\Ord}{\mathsf{Ord}}
\newcommand{\Diff}{\mathsf{Diff}}
\newcommand{\Mod}{\mathsf{Mod}}
\newcommand{\Trace}{\mathsf{Tr}}

\newcommand{\Policy}{\mathcal{P}}
\newcommand{\PolicyOffer}{\Policy_{\text{offer}}}
\newcommand{\PolicyReq}{\Policy_{\text{req}}}

\newcommand{\sem}[1]{\llbracket #1 \rrbracket}
\newcommand{\refines}{\preceq} 
\newcommand{\Verdicts}{\mathbb{V}}
\newcommand{\verdict}{\operatorname{verdict}}
\newcommand{\prodV}{V_{\mathrm{prod}}}

\newcommand{\odrl}[1]{\textsf{#1}}

\crefname{definition}{Definition}{Definitions}
\Crefname{definition}{Definition}{Definitions}
\crefname{theorem}{Theorem}{Theorems}
\Crefname{theorem}{Theorem}{Theorems}
\crefname{lemma}{Lemma}{Lemmas}
\Crefname{lemma}{Lemma}{Lemmas}
\crefname{proposition}{Proposition}{Propositions}
\Crefname{proposition}{Proposition}{Propositions}
\crefname{corollary}{Corollary}{Corollaries}
\Crefname{corollary}{Corollary}{Corollaries}
\crefname{remark}{Remark}{Remarks}
\Crefname{remark}{Remark}{Remarks}
\crefname{example}{Example}{Examples}
\Crefname{example}{Example}{Examples}
\crefname{assumption}{Assumption}{Assumptions}
\Crefname{assumption}{Assumption}{Assumptions}

\begin{document}
\title{Sort-Stratified Semantics for Temporal Conflict Detection in ODRL Policies}
\titlerunning{Sort-Stratified ODRL Temporal Semantics}
\author{
Daham M. Mustafa\inst{1}\inst{2}
\and Diego Collarana\inst{2}
\and Sabrina Kirrane\inst{3}
\and Christoph Lange\inst{1}\inst{2}
\and Christoph Quix\inst{1}\inst{2}
\and
Sandra Geisler
\inst{1}
\and Stefan Decker\inst{1}\inst{2}
\and Rafiqul Haque\inst{4}
}
\authorrunning{D. M. Mustafa et al.}
\institute{
  RWTH Aachen University, Aachen, Germany\\
\and
  Fraunhofer FIT, Sankt Augustin, Germany
\and
  Vienna University of Economics and Business (WU), Vienna, Austria
\and
University of Galway, Galway, Ireland and GATE Institute, Sofia, Bulgaria
}
\maketitle
\begin{abstract}
In the Open Digital Rights Language (ODRL), temporal constraints range over two sorts, instants and durations, but the comparison operators do not distinguish them. The same operator thus means ``earlier instant'' or ``shorter duration,'' leaving conflict detection between two policies unsound. We resolve this by sort stratification: each temporal operand is typed to one of two ordered domains, points in time or amounts of time. Each constraint then denotes an interval, and conflict reduces to interval comparison under a three-valued verdict (Conflict, Compatible, Unknown). We characterise the check's decidability across a static and a runtime fragment, prove it sound,
and evaluate it on a benchmark of policy problems compiled to TPTP and SMT-LIB, available as an artefact.\footnote{\url{https://github.com/Daham-Mustaf/odrl-temporal-benchmark}}
\end{abstract}
\smallskip
\noindent\textbf{Keywords:} Data spaces,
Policy Conflict Detection, Temporal Constraints, Interval Semantics,
Denotational Semantics, Three-Valued Semantics
\section{Introduction}
\label{sec:introduction}
Data spaces~\cite{otto2022designing, drees2021mobility} let organisations exchange digital assets under
explicit usage policies, which commonly carry temporal conditions, e.g., a policy may expire on a fixed
date, impose an embargo before first use, limit the total usage time, or grant access only on a
recurring schedule. These policies are written in the Open Digital Rights Language (ODRL)~2.2~\cite{odrl-model}, and adopted across Data spaces specifications~\cite{dssc2024blueprint}. For example, in Eclipse Dataspace Connector (EDC)~\cite{EDC2026} a provider publishes an \emph{offer} and a
consumer submits a \emph{request}; during contract negotiation~\cite{dsp2025} the two are checked
for \emph{compatibility}: an \emph{agreement} forms only when the offer and the request can be met
together. 

This check precedes any access;
enforcement then evaluates the agreement at runtime. ODRL expresses temporal conditions as
\emph{constraints} over dedicated operands: the instant \odrl{dateTime}; the duration-based
\odrl{delayPeriod}, \odrl{elapsedTime}, and \odrl{meteredTime}; and the recurrence
\odrl{timeInterval}~\cite{ODRLVocab}. Logical operators combine these constraints. To our knowledge, no ODRL formalisation or engine gives the duration or recurrence operands semantics; and although Salas et al.~\cite{salas2025evaluationcomparisonsemanticsodrl,salas2026normalisation}
and Bonatti et al.~\cite{bonatti2025towards} already compare two policies, they do so over
\odrl{dateTime} alone, as a sortless attribute. We give these operands semantics and check
compatibility between an offer and a request over the full temporal operand set.

Comparison operators do not distinguish between  instants and durations sorts, so the same \odrl{lt} operator has different interpretations: on an instant it means “earlier than”, as in $(\odrl{dateTime}, \odrl{lt}, \textsf{2026-12-31})$; on a duration it means “shorter than”, as in $(\odrl{elapsedTime}, \odrl{lt}, \textsf{P30D})$, where \textsf{P30D} denotes thirty days in ISO 8601. Because of this ambiguity, the operator alone does not determine the intended interpretation, nor does it prevent comparisons between incomparable sorts. Introducing a sort for each operand resolves this ambiguity by fixing the interpretation of operators and excluding comparisons between instants and durations. 

Additionally, the operands are not independent. Metered usage, the time actually spent using an asset, can never exceed the elapsed time since access began~\cite{ODRLVocab}; and under \odrl{andSequence}, which requires constraints to be satisfied in order, a \odrl{delayPeriod} is counted from the moment the preceding constraint is first satisfied, not from a fixed date. Thus
$\odrl{andSequence}(\odrl{dateTime}~\odrl{gteq}~\textsf{2026-01-01},\ \odrl{delayPeriod}~\odrl{gteq}~\textsf{P1D})$
permits the action only at least one day after the \textsf{2026-01-01} constraint is first satisfied. Checking each operand against its own constraint in isolation, therefore, misses conflicts that arise from these relations: an offer's $\odrl{meteredTime}~\odrl{eq}~\textsf{P30D}$ and a request's $\odrl{elapsedTime}~\odrl{lteq}~\textsf{P10D}$ are each satisfiable alone but jointly unsatisfiable, since thirty days of metered usage necessarily require at least thirty elapsed days, which the request's limit of ten forbids.

In this paper, we propose a sort-stratified semantics for ODRL's temporal constraints, and a conflict
detector built on it. The semantics gives each operand a precise
meaning, separating time points from durations, so that constraints the current operators read ambiguously gain a single interpretation. The detector decides whether an offer and a request are
compatible over their temporal terms, returning one of three answers, “compatible”, “in conflict”, or
“undetermined” when only one side constrains a condition; it catches conflicts spanning two
operands, such as metered use never exceeding elapsed time, handles the recurrence and ordered
\odrl{andSequence} operators, and classifies each conflict by the
reasoning it needs. The same semantics covers conflict detection at both negotiation time and runtime, so a conflict
found before agreement provably holds at runtime. The benchmark encodes
conflict-detection problems in TPTP and SMT-LIB and decides them with automated reasoners. The remainder of the paper covers preliminaries (\cref{sec:prelim}), the sort-stratified
denotational semantics (\cref{sec:problem}), the conflict-detection framework (\cref{sec:conflict}), the evaluation (\cref{sec:eval}),
 related work (\cref{sec:related}) and the conclusion
(\cref{sec:conclusion}).
\vspace{-10pt}
\section{Preliminaries and Motivating Example}
\label{sec:prelim}

An ODRL policy~\cite{odrl-model} comprises \emph{rules}, each of which is a permission,
prohibition, or obligation concerning an action a party performs on an asset. A rule may carry zero or more constraints and applies only when all of them are satisfied~\cite{ODRLVocab}. We
consider only the \emph{temporal constraints}, those whose left operand is a temporal operand and
which constrain the timing and duration of asset usage; non-temporal constraints and the
additional rule constructs (duties, remedies, and consequences) lie \emph{outside our scope}.

\noindent A \emph{constraint} is a triple $(\ell,\bowtie,v)$ of a \emph{left operand}
$\ell$, an \emph{operator} $\bowtie$, and a \emph{right operand}
$v$~\cite{odrl-model}. ODRL provides twelve operators for specifying atomic constraints, six comparison operators,
\odrl{eq}~($=$), \odrl{neq}~($\neq$), \odrl{lt}~($<$), \odrl{lteq}~($\leq$),
\odrl{gt}~($>$), \odrl{gteq}~($\geq$), and six set and membership operators
($\odrl{isA},\odrl{hasPart},\odrl{isPartOf},\odrl{isAllOf},\odrl{isAnyOf},\odrl{isNoneOf}$), ODRL combines constraints into \emph{logical constraints} through four operators: \odrl{and},
satisfied when all of them hold; \odrl{or}, when at least one holds; \odrl{xone}, when exactly
one holds; and \odrl{andSequence}, when all hold in the listed order~\cite{ODRLVocab}. The specification singles out \odrl{andSequence} as a source of temporal requirements between
its operands and notes that these can leave a policy unresolvable~\cite{ODRLVocab}.
\input{fig-bsb-denotations}
\noindent \odrl{dateTime} ranges over \emph{instants}, i.e., points on a timeline (ISO 8601
dates normalised to UTC, e.g.\ \textsf{2026-12-31}); \odrl{delayPeriod}, \odrl{elapsedTime},
\odrl{meteredTime}, and \odrl{timeInterval} range over \emph{durations}, i.e., amounts of time (ISO 8601 spans, e.g.\ \textsf{P30D}, \textsf{PT12H}). Three further left operands are temporal but do not lie on this global timeline, so they are out of scope~\cite{ODRLVocab}. \odrl{event} 
whose temporal extent is determined by an external event, not by a date or span~\cite{mustafa2026denotationalsemanticsodrlknowledgebased}. \odrl{absoluteTemporalPosition} and
\odrl{relativeTemporalPosition}
which refer to positions within a media asset's own timeline rather than calendar time (e.g.\ seconds 192 to 250) or a fraction of the stream (e.g.\ 33\% to 48\%)~\cite{ODRLVocab}.

\bigskip
\noindent\textbf{Motivating Example}
\label{sec:motivation}
In the Culture Dataspace, the provider, the Bavarian State Library (BSB), offers
a digitised manuscript (\textsf{drk:bsb-clm-14000}), valid from 1 January 2026,
under five conditions: it may be accessed only before 31 December 2026, used within a thirty-day total window, with exactly thirty days of metered (actual)
use, first used only one day after access is granted, and only on dates recurring
every thirty days from 1 January 2026. The consumer, the French National Library
(BnF), requests the same manuscript under its own conditions: access on 1 June
2027, only on dates recurring every forty-five days from that date, at most ten
days of use in total, and immediate first use, with no delay.

\begin{example}[BSB--BnF temporal access]\label{ex:bsb}
The BSB offer $\PolicyOffer$ has five constraints:
$\odrl{dateTime}~\odrl{lt}~\textsf{2026-12-31}$,
$\odrl{elapsedTime}~\odrl{lteq}~\textsf{P30D}$,
$\odrl{meteredTime}~\odrl{eq}~\textsf{P30D}$,
$\odrl{delayPeriod}~\odrl{gteq}~\textsf{P1D}$, and
$\odrl{timeInterval}~\odrl{eq}~\textsf{P30D}$ (anchored at \textsf{2026-01-01}).
The BnF request $\PolicyReq$ has four constraints:
$\odrl{dateTime}~\odrl{eq}~\textsf{2027-06-01}$,
$\odrl{elapsedTime}~\odrl{lteq}~\textsf{P10D}$,
$\odrl{delayPeriod}~\odrl{eq}~\textsf{PT0S}$, and
$\odrl{timeInterval}~\odrl{eq}~\textsf{P45D}$ (anchored at \textsf{2027-06-01}),
with no constraint on \odrl{meteredTime}.
\end{example}
\noindent
\cref{tab:example-constraints} shows the offer and request constraints for each temporal
operand as intervals, together with the per-operand verdict. The policies \Conflict\ on
\odrl{dateTime}, \odrl{delayPeriod}, and \odrl{timeInterval}, are \Compatible\ on
\odrl{elapsedTime}, and leave \odrl{meteredTime} \Unknown, since only the offer constrains it. One conflict, however, is invisible operand by operand: the offer's
$\odrl{meteredTime}~\odrl{eq}~\textsf{P30D}$ and the request's
$\odrl{elapsedTime}~\odrl{lteq}~\textsf{P10D}$ cannot hold together, since metered use cannot exceed the ten elapsed days the request allows. Identifying both the per-operand conflicts and
this cross-operand one before an agreement is formed is the task of the policy conflict
detector we present. Each verdict then governs the next step in negotiation: \Compatible\
permits an agreement, \Conflict\ requires at least one party to revise its terms, and
\Unknown\ leaves the result undetermined and calls for a counter-offer or \mbox{further information}.
\vspace{-10pt}
\section{Temporal Sorts and Interval Semantics}
\label{sec:problem}
\input{pipe-line}
\Cref{fig:pipeline} summarises the framework: a policy pair is typed to instant or
duration sorts, each constraint is denoted as an interval (a periodic set for
\odrl{timeInterval}), the denotations form a product tested against the background
theory $\Phi$, a tier fixes the decision procedure, and the per-operand verdicts
aggregate by $\min$ into \Conflict, \Compatible, or \Unknown, which drives the
negotiation step; on a \Compatible\ agreement the framework moves to runtime
enforcement (the dashed box), and the two background tints separate the static and
runtime phases. The
two sorts differ algebraically: durations add and have a zero, instants are only ordered, and they meet through a shift, where a duration moves an instant, and a
difference, where two instants give a duration. 
\begin{definition}[Temporal structures]\label{def:structures}
The \emph{timeline} $\timeT=(T,\leq_{\timeT})$ is a linear order that is
\emph{affine} over $(\mathbb{Q},+)$: fixing any origin identifies
$(T,\leq_{\timeT})$ order-preservingly with $(\mathbb{Q},\leq)$. The \emph{duration domain} $\durD=(D,+,0,\leq_{\durD})$ is a totally ordered
cancellative commutative monoid embedding order-preservingly into
$(\mathbb{Q}_{\geq0},+,\leq)$ and closed under the non-negative differences of instants. For an instant $s\in T$ and $d\in\mathbb{Q}$, the
\emph{shift} $s+d\in T$ is the instant $d$ after $s$; the \emph{difference}
$t-s\in\mathbb{Q}$ is its inverse, and the \emph{elapsed duration}
$\diff(s,t)=t-s$ lies in $D$ exactly when $s\leq_{\timeT}t$.
\end{definition}
\noindent
We exclude \textsf{xsd:yearMonthDuration} because month- and year-based durations do not denote fixed-length intervals. Consequently the full \textsf{xsd:duration} datatype is only partially ordered, since values such as \textsf{P1M} and \textsf{P30D} are not comparable. We build the semantics on intervals, so we first consider only the operators whose constraints denote intervals. 
A comparison such as \emph{before a value}, \emph{at least a value}, or
\emph{exactly a value} selects an order-convex subset of the operand's domain
(i.e., whenever it contains two points, it also contains all intermediate points).
\begin{definition}[Interval operators]\label{def:int-op}
On a totally ordered set $(D,\leq)$, an operator $\bowtie$ is an
\emph{interval operator} if for every $v\in D$ the solution set
$\{x\in D\mid x\bowtie v\}$ is order-convex. Among the ODRL operators the
interval operators are exactly
$\mathcal{O}_I=\{\odrl{lt},\odrl{lteq},\odrl{gt},\odrl{gteq},\odrl{eq}\}$.
\end{definition}
\noindent
Each interval operator is interpreted on the ordered domain fixed by its operand's sort; hence, every temporal operand must be assigned a sort to fix the meaning of its constraints.
\begin{definition}[Temporal sorts]\label{def:sort}
A \emph{temporal sort} is $\sigma\in\{\Inst,\Dur\}$: the instant sort $\Inst$,
interpreted over $\timeT$, and the duration sort $\Dur$, interpreted over
$\durD$. The sort fixes the totally ordered domain $(D_\ell,\leq)$ that an
operand $\ell$'s value ranges over and on which the interval operators
$\mathcal{O}_I$ are interpreted. Each self-contained left operand has a unique
$\sortof(\ell)$, with $\sortof(\odrl{dateTime})=\Inst$ and
$\sortof(\ell)=\Dur$ for
$\ell\in\{\odrl{delayPeriod},\odrl{elapsedTime},\odrl{meteredTime},
\odrl{timeInterval}\}$.
\end{definition}
\noindent
The ODRL vocabulary recommends a restricted set of operators for each temporal operand~\cite{ODRLVocab}: all interval operators for \odrl{dateTime}, but only specified subsets for the duration and recurrence operands. It recommends neither \odrl{neq} nor any set-based operator for temporal operands.
\begin{definition}[Admissible constraint]\label{def:admissible}
Let $\mathit{Adm}(\ell)$ be the operator set for each operand $\ell$:
\begin{align*}
\mathit{Adm}(\odrl{dateTime})&=\mathcal{O}_I,\;
\mathit{Adm}(\odrl{delayPeriod})=\{\odrl{eq},\odrl{gt},\odrl{gteq}\},\; 
\mathit{Adm}(\odrl{elapsedTime})=\mathit{Adm}(\odrl{meteredTime})
=\{\odrl{eq},\odrl{lt},\odrl{lteq}\},\\
\mathit{Adm}(\odrl{timeInterval})&=\{\odrl{eq}\}.
\end{align*}
A constraint $(\ell,\bowtie,v)$ is \emph{admissible} if
$\operatorname{\bowtie}\in\mathit{Adm}(\ell)$, $v\in D_\ell$, and
$\{x\in D_\ell\mid x\bowtie v\}\neq\emptyset$.
\end{definition}
\noindent
By \cref{def:int-op}, every admissible constraint denotes a nonempty order-convex subset of $D_\ell$.  For all operands except \odrl{timeInterval}, this set is an interval of the operand's
domain; we call these four operands \emph{interval-valued} and \odrl{timeInterval} the
\emph{periodic} operand. Policy semantics, satisfiability, and conflict detection are
defined in terms of these denotations.
\begin{definition}[Sort-indexed interval denotation]\label{def:denotation}
For an admissible $c=(\ell,\bowtie,v)$ with $\ell$ interval-valued and
$\sortof(\ell)=\sigma$, write $\leq$ for $\leq_{\timeT}$ when $\sigma=\Inst$ and
for $\leq_{\durD}$ when $\sigma=\Dur$. The denotation is the solution set
$\sem{\ell\,\bowtie\,v}=\{x\in D_\ell\mid x\bowtie v\}$:
\[
\begin{array}{@{}cccccc@{}}
\toprule
\bowtie & \odrl{eq} & \odrl{lteq} & \odrl{gteq} & \odrl{lt} & \odrl{gt}\\
\midrule
\sem{\ell\,\bowtie\,v} & \{v\} & (\bot,v] & [v,\top) & (\bot,v) & (v,\top)\\
\bottomrule
\end{array}
\]
A square bracket includes its endpoint and a round bracket excludes it; $\bot$ and
$\top$ are not elements of $D_\ell$ but unbounded markers ordered below and above
every element, so a bracket at $\bot$ or $\top$ is round. On $\durD$ the least
element is $0$, so $(\bot,v]$ and $(\bot,v)$ are $[0,v]$ and $[0,v)$.
\end{definition}
\noindent
In \cref{ex:bsb}, we obtain the following denotations:  
$\odrl{dateTime}\,\odrl{lt}\,\textsf{2026-12-31}$ denotes $(\bot,\textsf{2026-12-31})$,  
$\odrl{elapsedTime}\,\odrl{lteq}\,\textsf{P30D}$ denotes $[0,\textsf{P30D}]$,  $\odrl{delayPeriod}\,\odrl{gteq}\,\textsf{P1D}$ denotes $[\textsf{P1D},\top)$, and  
$\odrl{dateTime}\,\odrl{eq}\,\textsf{2027-06-01}$ denotes $\{\textsf{2027-06-01}\}$.
\vspace{-10pt}\subsection{Periodic Constraints}
\label{sec:periodic-constraints}
ODRL's \odrl{timeInterval} describes a condition that recurs after a fixed
period~\cite{ODRLVocab}. Its period is a duration, so \odrl{timeInterval} is
duration-sorted, yet its denotation is a set of instants rather than a duration
interval: the reference instant shifted repeatedly by the period. In \cref{ex:bsb}
the offer reopens every \textsf{P30D} from \textsf{2026-01-01}.
\begin{definition}[Periodic \odrl{timeInterval}]\label{def:ti-rec}
For an anchor $a\in\timeT$ and a period $p\in\durD$ with $p>0$, the
\emph{periodic set} is
\[
\Per(a,p)=\{\,a+kp\mid k\in\mathbb{Z}\,\}\subseteq\timeT .
\]
For a constraint $(\odrl{timeInterval},\odrl{eq},p)$ in a policy $P$ with anchor
$\dstart_P$, its denotation is
$\sem{(\odrl{timeInterval},\odrl{eq},p)}^{P}_{\ctx}=\Per(\dstart_P,p)$, and a
context $\ctx$ satisfies it when $\ctx.\now\in\Per(\dstart_P,p)$.
\end{definition}

\noindent
In the case of periodic temporal expressions, only \odrl{eq} is admissible
(\cref{def:admissible}). The integer $k$ indexes the occurrences: in \cref{ex:bsb}
the offer's set $\Per(\textsf{2026-01-01},\textsf{P30D})$ places $k=0$ at
\textsf{2026-01-01} and $k=1$ at \textsf{2026-01-31}, with larger $k$ later and
negative $k$ earlier. Letting $k$ range over $\mathbb{Z}$ makes each periodic set an
arithmetic progression with no first or last occurrence, so two sets are disjoint
not by which starts earlier but by whether the offset between their anchors is a
multiple of the step the periods share, a single divisibility condition. A
forward-only reading would instead add the guard $\ctx.\now\geq\dstart_P$, dropping
the negative-$k$ occurrences without changing any verdict. A \emph{context} $\ctx$ fixes the instant $\ctx.\now$, the trigger $\ctx.\trig$, the
access-begin $a_0$, and each policy's start $\dstart_P$ (all in $\timeT$); $\sem{\ell}_\ctx\in
D_\ell$ is the value of $\ell$ in $\ctx$.
\begin{lemma}[Recurrence conflict]\label{lem:rec}
For periodic constraints with anchors $a_1,a_2$ and periods $p_1,p_2$,
\[
\Per(a_1,p_1)\cap\Per(a_2,p_2)\neq\emptyset
\iff \gcd(p_1,p_2)\mid(a_2-a_1),
\]
where $\gcd(p_1,p_2)$ is the positive generator of the subgroup
$p_1\mathbb{Z}+p_2\mathbb{Z}$ of $(\mathbb{Q},+)$, and $g\mid\Delta$ means
$\Delta/g\in\mathbb{Z}$.
When the intersection is nonempty, it is itself periodic,
$\Per(a_1,p_1)\cap\Per(a_2,p_2)=\Per(a^{\ast},\operatorname{lcm}(p_1,p_2))$, where
$a^{\ast}$ is any common occurrence, obtained from the extended Euclidean algorithm
applied to $p_1,p_2$ and $a_2-a_1$.
\end{lemma}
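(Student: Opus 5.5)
The plan is to reduce the statement to elementary facts about finitely generated subgroups of $(\mathbb{Q},+)$. By \cref{def:structures}, fixing an origin identifies $\timeT$ with $(\mathbb{Q},\leq)$, so anchors become rationals and $a_2-a_1\in\mathbb{Q}$ is well defined and independent of the origin chosen. First I show that $p_1\mathbb{Z}+p_2\mathbb{Z}$ really is cyclic, so that $\gcd(p_1,p_2)$ makes sense. Write $p_i=m_i/N$ with a common denominator $N$ and positive integers $m_i$. Then $p_1\mathbb{Z}+p_2\mathbb{Z}=\tfrac1N(m_1\mathbb{Z}+m_2\mathbb{Z})=\tfrac{g_0}{N}\mathbb{Z}$ with $g_0=\gcd(m_1,m_2)$ by B\'ezout. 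Hence $g:=\gcd(p_1,p_2)=g_0/N$, and there are integers $u,v$ with $up_1+vp_2=g$. These come from the extended Euclidean algorithm on $m_1,m_2$.

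\emph{The equivalence.} An element of the intersection means $a_1+k_1p_1=a_2+k_2p_2$ for some $k_1,k_2\in\mathbb{Z}$. This is the same as $a_2-a_1=k_1p_1-k_2p_2\in p_1\mathbb{Z}+p_2\mathbb{Z}=g\mathbb{Z}$, that is, $g\mid(a_2-a_1)$.
\begin{itemize}
\item The forward direction is immediate from this.
\item For the converse, write $a_2-a_1=qg$ with $q\in\mathbb{Z}$. Set $k_1=qu$ and $k_2=-qv$. Then $a^{\ast}=a_1+qup_1$ is a common occurrence, which gives the constructive recipe claimed in the statement.
\end{itemize}

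\emph{The structure of the intersection.} Fix any common occurrence $a^{\ast}$. Since $a^{\ast}\in\Per(a_i,p_i)$, each periodic set can be re-anchored as $\Per(a_i,p_i)=\Per(a^{\ast},p_i)=a^{\ast}+p_i\mathbb{Z}$. So the intersection equals $a^{\ast}+(p_1\mathbb{Z}\cap p_2\mathbb{Z})$. It remains to show that $p_1\mathbb{Z}\cap p_2\mathbb{Z}=\ell\mathbb{Z}$ with $\ell=\operatorname{lcm}(p_1,p_2)>0$. Scaling by $N$ as before gives $\tfrac1N(m_1\mathbb{Z}\cap m_2\mathbb{Z})=\tfrac{\operatorname{lcm}(m_1,m_2)}{N}\mathbb{Z}$. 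I take this as the definition of $\operatorname{lcm}(p_1,p_2)$: the positive generator of $p_1\mathbb{Z}\cap p_2\mathbb{Z}$. I also note that it is independent of the choice of $N$, since the subgroup itself is. This yields $\Per(a^{\ast},\operatorname{lcm}(p_1,p_2))$.

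\emph{The obstacle.} The main delicacy is not the arithmetic but the interface with \cref{def:structures}. Periods live in $\durD\subseteq\mathbb{Q}_{\geq0}$, and shifts by negative multiples $kp$ with $k<0$ must be interpreted via the affine structure over $\mathbb{Q}$, not within $D$. I will therefore state explicitly that $a+kp$ means the shift by the rational $kp$. This is legitimate because the shift in \cref{def:structures} is defined for all $d\in\mathbb{Q}$. I will also state that gcd and lcm are taken in $\mathbb{Q}$ via the common-denominator reduction. Once this is fixed, everything reduces to B\'ezout's identity over $\mathbb{Z}$.
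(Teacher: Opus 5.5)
Your proof is correct and follows the paper's argument for the equivalence: the intersection is nonempty exactly when $a_2-a_1$ lies in $p_1\mathbb{Z}+p_2\mathbb{Z}$, and that subgroup is cyclic with generator $\gcd(p_1,p_2)$. The paper's proof cites the fact that finitely generated subgroups of $(\mathbb{Q},+)$ are cyclic and stops after the equivalence, leaving the $\operatorname{lcm}$ periodicity claim, the meaning of $\operatorname{lcm}$ for rationals, and the extended-Euclid construction of $a^{\ast}$ unproved; your common-denominator reduction, the B\'ezout choice $a^{\ast}=a_1+qup_1$, and the re-anchoring argument with $p_1\mathbb{Z}\cap p_2\mathbb{Z}=\operatorname{lcm}(p_1,p_2)\mathbb{Z}$ correctly supply those missing parts.
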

\begin{proof}
The intersection is nonempty iff $a_2-a_1=kp_1-jp_2$ for some
$k,j\in\mathbb{Z}$, that is $a_2-a_1\in p_1\mathbb{Z}+p_2\mathbb{Z}$. This
subgroup of $(\mathbb{Q},+)$ is finitely generated, hence cyclic with positive
generator $\gcd(p_1,p_2)$, so membership is exactly
$\gcd(p_1,p_2)\mid(a_2-a_1)$.
\end{proof}
\noindent
\begin{lemma}[Windowed recurrence]\label{lem:windowed-rec}
Let $I$ be an interval of $\timeT$ with ends $\ell\in\timeT\cup\{-\infty\}$,
$u\in\timeT\cup\{+\infty\}$, and let $\Per(a,p)$ have $p>0$. Then
$I\cap\Per(a,p)\neq\emptyset$ iff some $k\in\mathbb{Z}$ has $a+kp\in I$. For bounded $I$
this is $\lceil(\ell-a)/p\rceil\le\lfloor(u-a)/p\rfloor$ (strict where the matching end of
$I$ is open); for $u=+\infty$ it always holds, dually for $\ell=-\infty$. The condition is
Presburger-definable.
\end{lemma}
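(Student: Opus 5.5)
The first claim is immediate from \cref{def:ti-rec}: $\Per(a,p)=\{a+kp\mid k\in\mathbb{Z}\}$, so $I\cap\Per(a,p)\neq\emptyset$ holds exactly when some $k$ has $a+kp\in I$. The substance lies in the bounded characterisation. I will fix $a$ as origin, using the affine identification of $\timeT$ with $(\mathbb{Q},\leq)$ from \cref{def:structures}. Then $a+kp$ corresponds to the rational $kp$, and $\ell-a$, $u-a$ become rationals.

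Membership $a+kp\in I$ for a closed $I=[\ell,u]$ reads $\ell-a\le kp\le u-a$. Since $p>0$, dividing preserves order, so this is $(\ell-a)/p\le k\le(u-a)/p$. Because $k$ is an integer, the constraint is equivalent to $\lceil(\ell-a)/p\rceil\le k\le\lfloor(u-a)/p\rfloor$. Such a $k$ exists iff $\lceil(\ell-a)/p\rceil\le\lfloor(u-a)/p\rfloor$; in that case the ceiling itself is a witness.

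For open ends I will replace the relevant bound by the least integer strictly above or the greatest integer strictly below. These are $\lfloor x\rfloor+1$ and $\lceil y\rceil-1$ respectively. This is the intended reading of ``strict where the matching end is open.'' This bookkeeping is the only delicate point. I will state it carefully: when $x=(\ell-a)/p$ is itself an integer, the open left end excludes $k=x$. The condition remains a comparison of two integers, which is what the statement's parenthetical asserts.

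For the unbounded cases with $u=+\infty$, any $k\ge\lceil(\ell-a)/p\rceil$, plus one if the end is open, is a witness. Such a $k$ exists because $\mathbb{Z}$ is unbounded above, and $p>0$ makes $kp$ unbounded (Archimedean property of $\mathbb{Q}$). The case $\ell=-\infty$ is dual with $k$ small enough, and the case with both ends infinite is trivial.

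For Presburger-definability, I will clear denominators. Choosing a common denominator $N$ for $a,\ell,u,p$ (all rationals under the identification), the condition becomes $\exists k\in\mathbb{Z}:\ N(\ell-a)\le k\cdot Np\le N(u-a)$, with strict inequalities at open ends. This is a linear formula over integers with the constant coefficient $Np$, hence Presburger. I expect no real obstacle. The main care is in the open/closed endpoint cases, and in making explicit that $p$ is a fixed constant, so the multiplication is by a numeral.
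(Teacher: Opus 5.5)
Your proposal is correct and follows the paper's proof essentially step for step. Like the paper, you divide the membership condition by $p>0$, pass to $\lceil(\ell-a)/p\rceil\le k\le\lfloor(u-a)/p\rfloor$, and handle open ends by replacing the bounds with $\lfloor(\ell-a)/p\rfloor+1$ and $\lceil(u-a)/p\rceil-1$; your common-denominator reduction for Presburger-definability is a more explicit version of what the paper only asserts (``floor and divisibility over a fixed period are Presburger-definable'').
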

\begin{proof}
$a+kp\in[\ell,u]$ iff $(\ell-a)/p\le k\le(u-a)/p$ for $p>0$; an integer in that range
exists iff $\lceil(\ell-a)/p\rceil\le\lfloor(u-a)/p\rfloor$, and if $u=+\infty$ any
$k\ge\lceil(\ell-a)/p\rceil$ works. For an open end the bound is strict: an open left end at $\ell$ gives $k\ge\lfloor(\ell-a)/p\rfloor+1$ and an open right end at $u$ gives $k\le\lceil(u-a)/p\rceil-1$, in place of $\lceil(\ell-a)/p\rceil$ and $\lfloor(u-a)/p\rfloor$. Floor and divisibility over a fixed period are
Presburger-definable.
\end{proof}
\noindent
Consider the application of \cref{lem:rec} to \cref{ex:bsb}. The offer recurs every \textsf{P30D} from
\textsf{2026-01-01} and the request every \textsf{P45D} from \textsf{2027-06-01}. The two anchors
are $516$ days apart, and the shared step is $\gcd(30,45)=15$ days. Since $516$ is not a multiple
of $15$, the periodic sets are disjoint and the recurrences \Conflict. Had the offset been a
multiple of $15$ days, the sets would share an instant and be \Compatible.
\vspace{-10pt}
\subsection{Conflict Criterion and the Product Denotation}
\label{sec:conflict-criterion}
When multiple constraints apply to one operand, its denotation is the set of values
satisfying all of them, i.e., the intersection of their intervals. Among temporal
operands, only \odrl{dateTime} is bounded from both sides; the duration operands
take a single upper or lower bound.
\begin{lemma}[Per-operand normalisation]\label{lem:normalisation}
For an interval-valued operand $\ell$ and a finite set $C$ of admissible constraints
conjoined under \odrl{and}, the \emph{normalised interval}
\[
I_\ell(C)=\bigcap_{c\in C,\ \operatorname{operand}(c)=\ell}\sem{c}
\]
(taken as $D_\ell$ when no constraint of $C$ has operand $\ell$) is either an
interval over $D_\ell$ or empty.
\end{lemma}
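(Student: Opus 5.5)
The plan is to show that order-convexity is preserved under finite intersection, and then that a nonempty order-convex subset of $D_\ell$ is an interval in the sense of \cref{def:denotation}: a set determined by a lower end (an element of $D_\ell$ or $\bot$, open or closed) and an upper end (an element or $\top$). I will proceed by induction on the number $n$ of constraints in $C$ with operand $\ell$.

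For $n=0$, $I_\ell(C)=D_\ell=(\bot,\top)$, which is an interval by convention. For the induction step, \cref{def:int-op} together with admissibility (\cref{def:admissible}) gives that each $\sem{c}$ is one of the five forms tabulated in \cref{def:denotation}. Each of these is described by a lower end $(\lambda,\beta)$ and an upper end $(\mu,\gamma)$, where $\lambda\in D_\ell\cup\{\bot\}$, $\mu\in D_\ell\cup\{\top\}$, and $\beta,\gamma$ record whether the end is included. The intersection of two such sets is again of this form. Its lower end is the larger of the two lower ends, and when the values tie it is closed only if both ends are closed. Dually, its upper end is the smaller of the two upper ends. This computation uses only the totality of $\leq$ on $D_\ell$ (\cref{def:sort}), which is what allows the two bounds to be compared. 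The result either satisfies $\lambda<\mu$, or $\lambda=\mu$ with both ends closed, in which case it is a nonempty interval (possibly the singleton $\{v\}$, as for \odrl{eq}). Otherwise it is empty.

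The remaining care concerns $\Dur$-sorted operands. There $D_\ell$ has least element $0$, so a lower end of $\bot$ must be read as $[0$, as \cref{def:denotation} already stipulates. I will check that taking the maximum of lower ends is consistent with this convention. The only issue is that $(\bot,v]$ and $[0,v]$ must be identified, and since $0\leq x$ for every $x\in D_\ell$, the two sets coincide. A second point is that the domain may be dense (it embeds into $\mathbb{Q}_{\geq0}$) or possibly discrete. This matters for the emptiness test. If $\lambda<\mu$ but there is no element strictly between them and both ends are open, the set is empty. I will avoid having to analyse this case separately, because the claim is only the dichotomy "interval or empty". A set with the bound form described above is an interval whenever it is nonempty, so the exact emptiness criterion is not needed.

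I expect the main obstacle to be definitional rather than mathematical: making precise what "interval over $D_\ell$" means so that the bound-pair representation is closed under intersection, including the mixed cases where one end is $\bot$ or $\top$. Once the representation is fixed as pairs of (bound, inclusion flag) with $\bot,\top$ always open, the induction step is a short case check on max and min.
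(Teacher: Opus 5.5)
Your induction on bound pairs is correct and is essentially the paper's argument made explicit: the paper's proof is the single remark that a finite intersection of intervals of a totally ordered set is an interval or empty, and your max-of-lower-ends/min-of-upper-ends computation is exactly the verification of that fact. One small caution: your opening sentence claims that every nonempty order-convex subset of $D_\ell$ has ends in $D_\ell\cup\{\bot,\top\}$, which is false for a dense domain such as $\mathbb{Q}$ (take $\{x\mid x^2<2\}$); your induction never uses this claim, since it works with bound pairs throughout.
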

\begin{proof}
A finite intersection of intervals of a totally ordered set is an interval or empty.
\end{proof}
\noindent
Two intervals on one operand are disjoint when one lies entirely below the other,
which we check by comparing the upper bound of one against the lower bound of the
other.
\begin{definition}[Bound order]\label{def:precedence}
For an upper bound $u$ and a lower bound $l$, we write $u\prec l$ iff $u<l$, or
$u=l$ and at least one of the corresponding intervals excludes the boundary point.
\end{definition}
\noindent
The second case, $u=l$ covers boundary-touching intervals, which are disjoint if at least one excludes the shared boundary point.
\begin{theorem}[Conflict criterion]\label{thm:criterion}
For admissible $c_1,c_2$ on the same operand, with lower bounds $l_1,l_2$ and upper
bounds $u_1,u_2$,
\[
\sem{c_1}\cap\sem{c_2}=\emptyset\iff u_1\prec l_2\ \vee\ u_2\prec l_1.
\]
\end{theorem}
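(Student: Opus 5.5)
We prove both directions from the shape of the denotations in \cref{def:denotation}. Each $\sem{c_i}$ is a nonempty (by admissibility, \cref{def:admissible}) order-convex subset of the totally ordered domain $(D_\ell,\leq)$, with lower bound $l_i\in D_\ell\cup\{\bot\}$ and upper bound $u_i\in D_\ell\cup\{\top\}$, each marked closed or open. We adopt the conventions that a bound at $\bot$ or $\top$ is open and that $\bot<x<\top$ for every $x\in D_\ell$. On $\durD$, with $0$ as the least element, a lower bound $\bot$ is read as the closed bound $0$. Then $x\in\sem{c_i}$ iff $x$ lies strictly above $l_i$, or equals $l_i$ when that end is closed, and dually for $u_i$.

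\emph{($\Leftarrow$)} Suppose $u_1\prec l_2$. If $u_1<l_2$, every $x\in\sem{c_1}$ has $x\leq u_1<l_2$, while every $y\in\sem{c_2}$ has $y\geq l_2$. Hence no common element exists. If $u_1=l_2$ and one of the two ends is open, any common $x$ would have to satisfy $x\leq u_1$ and $x\geq l_2$, forcing $x=u_1=l_2$. That point is excluded by the open end, so again there is no common element. The case $u_2\prec l_1$ is symmetric.

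\emph{($\Rightarrow$)} We argue by contraposition. Assume $\neg(u_1\prec l_2)$ and $\neg(u_2\prec l_1)$, and exhibit a witness in the intersection. Let $L=\max(l_1,l_2)$ and $U=\min(u_1,u_2)$, taking at each the most restrictive openness among tied bounds. The negated conditions give $l_2\leq u_1$ and $l_1\leq u_2$, with equality only when both tied ends are closed. We also have $l_i\leq u_i$ by nonemptiness. Together these give $L\leq U$, with $L=U$ only if both resulting ends are closed.

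If $L=U$, that common point lies in both sets. If $L<U$, we need a point strictly between them, or an endpoint where it is closed. This is the main obstacle, since it requires density or unboundedness of the domain. On $\timeT$, density follows from the affine identification with $\mathbb{Q}$ in \cref{def:structures}. When a bound is $\bot$ or $\top$, a suitable point is obtained by shifting a finite bound by $\pm1$, using that $\timeT$ has no endpoints.

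On $\durD$ the same step is more delicate. We use the observation that the admissible duration operators $\mathit{Adm}$ only produce sets of the form $\{v\}$, $[0,v]$, $[0,v)$, $[v,\top)$ or $(v,\top)$. We therefore check the finitely many pairings directly. For example, $[0,v)\cap(w,\top)$ with $w<v$ needs a point of $D$ in $(w,v)$. If $D$ is not dense this can fail, so here we would invoke the embedding into $\mathbb{Q}_{\geq0}$ together with closure under differences of instants; since the latter range over all of $\mathbb{Q}_{\geq0}$, $D$ is all of $\mathbb{Q}_{\geq0}$ and the midpoint $(v+w)/2$ is a witness. In every case we obtain an element of $\sem{c_1}\cap\sem{c_2}$, which completes the contraposition.
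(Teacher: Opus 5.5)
Your proof is correct and is essentially the paper's argument: the paper notes that two nonempty intervals of a total order are disjoint iff one lies wholly below the other, read off as $u\prec l$, and you spell this out as a direct check plus a contrapositive witness built from $\max(l_1,l_2)$ and $\min(u_1,u_2)$. What you add is the density of the domain, which the paper's one-line proof uses without saying so (in $\mathbb{Z}$, $(\bot,1)$ and $(0,\top)$ are disjoint yet neither bound precedes the other under $\prec$); that density is only really needed on $\timeT$, because under $\mathit{Adm}$ a single duration operand never pairs a downward ray with an upward one, so your example $[0,v)\cap(w,\top)$ does not arise and the $\durD$ cases follow from admissibility alone.
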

\begin{proof}
Each denotation is a nonempty interval of a totally ordered domain (\cref{def:denotation}
under admissibility), and two such intervals are disjoint iff one lies wholly below the
other, that is, an upper bound falls before the other lower bound under $\prec$.
\end{proof}
\noindent
A policy bounds several operands at once. Because distinct operands range over
independent domains, the region satisfying all constraints is the product of the
per-operand intervals.
\begin{definition}[Instant-axis denotation]\label{def:now-denotation}
\odrl{dateTime} and \odrl{timeInterval} both constrain $\ctx.\now$, so they share the
instant axis. The \emph{instant-axis denotation} of a policy $C$ is $N(C)=W(C)\cap S(C)$,
where $W(C)=I_{\odrl{dateTime}}(C)$ is the normalised \odrl{dateTime} interval
(\cref{lem:normalisation}; the full timeline if absent) and $S(C)$ is the periodic set of
$C$'s \odrl{timeInterval} (\cref{def:ti-rec}; the full timeline if absent).
\end{definition}
\noindent
The date and recurrence conditions have already been combined into one instant axis, so every
remaining condition concerns a duration, and durations vary independently of one another. A
policy's denotation is then the instant axis together with the range allowed for each duration,
satisfied only when all of these conditions hold together.
\begin{definition}[Product denotation]\label{def:product-denotation}
For a finite conjunction $C$, its denotation is the product of the shared instant
axis with the duration operands,
\[
\sem{C}=N(C)\times I_{\odrl{elapsedTime}}(C)\times I_{\odrl{meteredTime}}(C)
\times I_{\odrl{delayPeriod}}(C),
\]
where $N(C)$ is the instant-axis denotation of \cref{def:now-denotation}. A value
vector lies in $\sem{C}$ exactly when its instant component lies in $N(C)$ and each
duration component lies in its operand's normalised interval. \odrl{dateTime} and
\odrl{timeInterval} are not separate product axes: both constrain $\ctx.\now$ and
enter only through $N(C)$.
\end{definition}
\noindent
In \cref{ex:bsb} the offer's denotation is
$N(\PolicyOffer)\times[0,\textsf{P30D}]\times\{\textsf{P30D}\}\times[\textsf{P1D},\top)$ on the
instant, \odrl{elapsedTime}, \odrl{meteredTime}, and \odrl{delayPeriod} axes, where
$N(\PolicyOffer)=(\bot,\textsf{2026-12-31})\cap\Per(\textsf{2026-01-01},\textsf{P30D})$ folds the
\odrl{dateTime} window and the \odrl{timeInterval} recurrence onto one axis. Conversely, every product of realizable intervals comes from some
conjunction, so the denotation expresses exactly these product regions and no others (\cref{thm:product-expressibility}).
\begin{theorem}[Product expressibility]\label{thm:product-expressibility}
Let $I_1\times\cdots\times I_n$ have each $I_k$ a nonempty interval of $D_{\ell_k}$
realizable by admissible constraints on $\ell_k$ (\cref{def:admissible}). Then some
conjunction $C$ of admissible constraints has $\sem{C}=I_1\times\cdots\times I_n$.
\end{theorem}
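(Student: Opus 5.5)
The plan is to build $C$ operand by operand and then assemble the product using \cref{def:product-denotation}. By hypothesis each $I_k$ is realizable, so for each $k$ there is a finite set $C_k$ of admissible constraints on $\ell_k$ with $\bigcap_{c\in C_k}\sem{c}=I_k$. I read the operands $\ell_1,\dots,\ell_n$ as pairwise distinct axes of the product, and I allow the instant axis to be represented either by \odrl{dateTime} alone or by \odrl{dateTime} together with \odrl{timeInterval}. I then set $C=\bigcup_k C_k$.

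First, for each interval-valued operand $\ell_k$, I check that $I_{\ell_k}(C)=I_k$. \Cref{lem:normalisation} defines $I_{\ell_k}(C)$ as the intersection over exactly those constraints of $C$ whose operand is $\ell_k$. Since the operands are distinct, those constraints are exactly $C_k$, so the intersection is $I_k$. For any duration axis not among the $\ell_k$, no constraint is added, and the normalised interval is the full $D_\ell$ by the convention in \cref{lem:normalisation}. I would treat such an axis as the implicit factor $D_\ell$, which is trivially realizable by the empty set of constraints.

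Second, the instant axis. If the factor is a \odrl{dateTime} interval, then $W(C)=I_k$. If no \odrl{timeInterval} constraint is present, $S(C)$ is the full timeline, so $N(C)=W(C)\cap S(C)=I_k$ by \cref{def:now-denotation}. If the statement is also meant to cover a periodic factor $\Per(a,p)$, I would include $(\odrl{timeInterval},\odrl{eq},p)$ with anchor $\dstart_P=a$, which gives $S(C)=\Per(a,p)$. Then $N(C)$ is the intended intersection, and it is nonempty whenever \cref{lem:windowed-rec} says so.

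Finally, \cref{def:product-denotation} gives $\sem{C}$ as the product of these factors, which equals $I_1\times\cdots\times I_n$. Each constraint in $C$ is admissible because it comes from some admissible $C_k$, so $C$ is a conjunction of admissible constraints.

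The main obstacle is bookkeeping rather than mathematics. One issue is making precise what realizable means, namely an intersection of admissible denotations. With the restricted operator sets of \cref{def:admissible}, for example, \odrl{elapsedTime} can only yield intervals of the form $[0,v]$, $[0,v)$ or $\{v\}$. The proof only uses realizability as given, so this restriction causes no problem. The other issue is ensuring that contributions from different operands do not interact. Non-interaction follows because \cref{lem:normalisation} filters constraints by operand, and \odrl{dateTime} and \odrl{timeInterval} enter only through $N(C)$. I would state both points explicitly.
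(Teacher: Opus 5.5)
Your proposal is correct and follows essentially the paper's route: realize each factor by constraints on its own operand and conjoin them. The paper spells out which shapes are realizable (one operator per one-sided or point shape, a lower-plus-upper pair only on \odrl{dateTime}, no constraint for the full domain), while you take realizability as given and instead make explicit, via the operand filter of \cref{lem:normalisation} and $N(C)$, why the per-operand pieces do not interact. Your periodic aside is not needed, since a \odrl{timeInterval} denotation is a set of instants rather than an interval of $D_{\ell_k}$ and so lies outside the hypothesis.
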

\vspace{-10pt}
\subsection{Background Constraints and Tiered Decision}
\label{sec:tiers}
A policy's constraints permit certain executions, but some properties hold in every execution: a duration is never negative, and metered usage never exceeds elapsed time. We collect these properties into a background theory, so conflict detection asks not just whether two policies overlap but whether their overlap is consistent with the background theory.
\begin{definition}[Background theory and conflict]\label{def:frame}
Write $\sem{\ell}_\ctx\in D_\ell$ for the value operand $\ell$ takes in a context
$\ctx$; let $C_i$ be the constraint set of policy $P_i$ and $a_0$ the instant access
is granted, shared by the two policies. The \emph{background theory} is
\[
\begin{aligned}
\Phi={}&\{0\leq_\durD\sem{op}_\ctx \mid op\text{ a duration operand}\}
\cup\{\,\sem{\odrl{meteredTime}}_\ctx\leq_\durD\sem{\odrl{elapsedTime}}_\ctx\,\}\\
&\cup\{\,\sem{\odrl{delayPeriod}}_\ctx\leq_\durD\sem{\odrl{elapsedTime}}_\ctx\,\}
\cup\{\,a_0\leq_\timeT\ctx.\trig\leq_\timeT\ctx.\now\,\}.
\end{aligned}
\]
Constraint sets $C_1,C_2$ \emph{conflict} when $C_1\wedge C_2\wedge\Phi$ is
unsatisfiable, equivalently $\sem{C_1}\cap\sem{C_2}\cap\sem{\Phi}=\emptyset$, where
$\sem{\Phi}$ is the set of value assignments satisfying $\Phi$.
\end{definition}
\noindent
The product denotation $\sem{C}$ captures only per-operand restrictions. It ignores
cross-operand relations from the background theory $\Phi$. Therefore, some conflicts
are invisible at the product level and become apparent only after intersecting with
$\sem{\Phi}$. Writing $m$ and $e$ for the metered and elapsed values,
$\odrl{meteredTime}~\odrl{eq}~\textsf{P30D}$ and
$\odrl{elapsedTime}~\odrl{lteq}~\textsf{P10D}$ are consistent on each axis, yet
$\sem{\odrl{meteredTime}}_\ctx\leq\sem{\odrl{elapsedTime}}_\ctx$ forces
$30\leq m\leq e\leq 10$, which is inconsistent.

\noindent
The complexity of conflict detection depends on whether the constraints require simple ordering,
arithmetic between values, or reasoning about periodic structure. These give three tiers. Order
alone ($\Ord$) suffices when each atom bounds one operand by a literal. A difference ($\Diff$) is
needed when an operand is bounded relative to a constrained instant, or when two operands are
related by $\Phi$. Modular reasoning ($\Mod$) is needed when an operand is periodic.
\begin{definition}[Tier]\label{def:tier}
Let $S=C_1\wedge C_2\wedge\Phi$ be the combined system of \cref{def:frame}. A
cross-operand relation of $\Phi$ is \emph{active} when both operands it relates are
constrained by $C_1$ or $C_2$. Then
\[
\tier(S)=
\begin{cases}
\Mod  & \text{some operand of }S\text{ is periodic (\cref{def:ti-rec})},\\
\Diff & \text{otherwise, if }S\text{ contains a difference }\diff(s,t)\text{ over a}\\
      & \text{constrained instant or an active cross-operand relation},\\
\Ord  & \text{otherwise.}
\end{cases}
\]
\end{definition}
\noindent
Each tier has its own decision procedure, each a special case of the next. The order tier
uses a bound comparison, the difference tier a negative-cycle check on a difference graph, and the
modular tier a divisibility test.
\begin{theorem}[Tiered conflict decision]\label{thm:tiered}
Let $S=C_1\wedge C_2\wedge\Phi$. If $\tier(S)=\Ord$, then $S$ is unsatisfiable iff some shared
operand has disjoint bounds (\cref{thm:criterion}), an EPR problem. If $\tier(S)=\Diff$, then $S$
is unsatisfiable iff its difference graph has a negative cycle, in PTIME, which subsumes $\Ord$
since $\Ord\subseteq\Diff$. If $\tier(S)=\Mod$, decide the instant axis and the duration axes and
conjoin: $S_1\cap S_2$ is $\Per(a^{\ast},\operatorname{lcm}(p_1,p_2))$ when
$\gcd(p_1,p_2)\mid(a_2-a_1)$ and $\emptyset$ otherwise (\cref{lem:rec}); with $W$ the feasible
$\now$-interval cut out by the \odrl{dateTime} bounds and the difference system, the instant axis
conflicts iff $S_1\cap S_2=\emptyset$, $W=\emptyset$, or $W\cap(S_1\cap S_2)=\emptyset$
(\cref{lem:windowed-rec}), and $S$ is unsatisfiable iff the instant axis conflicts or the duration
difference system has a negative cycle. Each sub-check is in Presburger arithmetic. The decision is
sound and complete relative to $\Phi$: every conflict entailed by $C_1\wedge C_2\wedge\Phi$ is
detected; $\Phi$ (\cref{def:frame}) fixes the domain relations (durations nonnegative, metered at most elapsed, and delay at most elapsed), so a conflict resting on a relation outside $\Phi$ is not guaranteed. The proof is in \cref{app:proofs}.
\end{theorem}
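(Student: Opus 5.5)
We prove the three tiers in turn, each case reusing the previous one, and obtain soundness and completeness relative to $\Phi$ because each case characterises unsatisfiability exactly.

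\textbf{Order tier.} When $\tier(S)=\Ord$, no cross-operand relation of $\Phi$ is active and no difference over a constrained instant occurs. The only parts of $\Phi$ still in play are then the nonnegativity atoms and the ordering $a_0\leq\ctx.\trig\leq\ctx.\now$. The nonnegativity atoms are already built into $D_\ell=\durD$ (\cref{def:denotation}). The ordering only involves the unconstrained variables $a_0$ and $\trig$, so it can always be satisfied by choosing them below $\now$; here we use that $\timeT$ is affine over $\mathbb{Q}$ and so has no least element. Hence $S$ decomposes into a product. By \cref{def:product-denotation} and \cref{lem:normalisation}, $\sem{C_1}\cap\sem{C_2}$ is the product of the per-operand intersections, so it is empty iff some factor is empty. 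For an operand constrained by a single admissible constraint on each side, \cref{thm:criterion} applies directly. With several constraints per side, we apply it to the normalised intervals, extending the bound order to intervals, which is the same argument. If an operand is constrained by only one side, its factor is nonempty by admissibility and normalisation. Each bound comparison is a ground atom over $\leq$ on finitely many constants, which gives the EPR claim.

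\textbf{Difference tier.} Every atom of $S$ can be written as $x-y\leq c$ or $x-y<c$ with $c\in\mathbb{Q}$. A literal bound $x\bowtie v$ becomes $x-z\bowtie v$ for a fresh zero variable $z$; the cross-operand relations $m\leq e$ and $d\leq e$ become $m-e\leq0$ and $d-e\leq0$; the trigger and now ordering and the elapsed duration $\diff(s,t)=t-s$ are also differences. The instant-sorted variables can be put on $\mathbb{Q}$ by fixing an origin (\cref{def:structures}). We then invoke the standard theorem for difference logic over the dense order $\mathbb{Q}$: a system of strict and non-strict difference constraints is satisfiable iff its constraint graph has no negative cycle, where a cycle of weight exactly zero counts as negative if it contains a strict edge. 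The weights live in $\mathbb{Q}\times\{0,-\epsilon\}$ with lexicographic order. Bellman--Ford decides this in polynomial time. One check remains: the duration domain $D$ embeds in $\mathbb{Q}_{\ge0}$ and is closed under nonnegative differences, and we must make sure a rational solution can be realised in $D$. Taking $D=\mathbb{Q}_{\ge0}$ as the intended model settles this. If $D$ is a proper submonoid, we instead use a shortest-path solution, whose values are sums of the literal constants and therefore lie in $D$. The Ord tier is the special case of this argument in which the graph has only edges to $z$, so every cycle has length two and a negative cycle is exactly the condition $u\prec l$.

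\textbf{Modular tier.} This is the main obstacle, because periodicity breaks convexity and the instant axis is no longer a difference system. The plan is to separate the axes. The periodic constraint touches only $\now$, and the only link between $\now$ and the duration variables is through $\trig$ and $a_0$. Projecting the difference system onto $\now$ therefore yields an interval $W$, computed by Fourier--Motzkin elimination or by shortest paths to $z$. For any $\now\in W$ the rest of the system is satisfiable, because difference systems have convex projections and the duration part does not depend on the particular value of $\now$ within $W$. The key step is to argue this rigorously: either $\now$ appears only in a lower or upper bound relative to $\trig$, or we keep the whole graph and add the single edge fixing $\now$. Once this is in place, $S$ is satisfiable iff $W\cap S_1\cap S_2\neq\emptyset$ and the duration subsystem has no negative cycle. \cref{lem:rec} computes $S_1\cap S_2$ (one side or both may be the full timeline), and \cref{lem:windowed-rec} decides whether the resulting periodic set meets $W$. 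All of these sub-checks use floor, divisibility by fixed rational periods and linear bounds, so after scaling by the common denominator they are Presburger-definable.

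\textbf{Soundness and completeness.} In every tier the stated test is equivalent to unsatisfiability of $C_1\wedge C_2\wedge\Phi$. A detected conflict is therefore genuine, and every conflict entailed by the system is detected. No claim is made about relations outside $\Phi$.
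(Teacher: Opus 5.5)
\textbf{Overall.} Your proposal follows the paper's proof in structure. It uses the same three cases: the $\Ord$ case by product decomposition plus \cref{thm:criterion}, the $\Diff$ case by a difference-logic encoding and the negative-cycle test, and the $\Mod$ case by splitting off the instant axis and combining \cref{lem:rec} with \cref{lem:windowed-rec}. Relative soundness and completeness then follow because each case characterises unsatisfiability exactly. You add care the paper omits: strict edges in the cycle test, the ordering $a_0\le\trig\le\now$, and a recurrence present on only one side.

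\textbf{The gap.} One step is false as justified: ``if an operand is constrained by only one side, its factor is nonempty by admissibility and normalisation.'' Admissibility is per constraint, and \cref{lem:normalisation} explicitly allows $I_\ell(C)=\emptyset$. Suppose one policy contains the admissible constraints $(\odrl{dateTime},\odrl{lt},\textsf{2026-01-01})$ and $(\odrl{dateTime},\odrl{gt},\textsf{2027-01-01})$ and the other constrains nothing. Then $\tier(S)=\Ord$ and $S$ is unsatisfiable, yet there is no shared operand at all. The same presupposition sits in your application of \cref{thm:criterion} to normalised intervals. That theorem assumes nonempty intervals, and with $I_\ell(C_1)=\emptyset$ and $I_\ell(C_2)$ wide, neither $u_1\prec l_2$ nor $u_2\prec l_1$ need hold.

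So the $\Ord$ equivalence needs a fix, in one of two forms:
\begin{itemize}
\item assume each policy is individually consistent, $I_\ell(C_i)\neq\emptyset$ for all $i,\ell$ (the paper's proof also uses this silently); or
\item restate the test as emptiness of $I_\ell(C_1)\cap I_\ell(C_2)$ on some operand.
\end{itemize}
The second form is what your own reduction to two-cycles through $z$ actually computes, since such a cycle may pair two bounds of the same policy. Relatedly, the inactive cross-operand relations of $\Phi$ are still present in the $\Ord$ tier and need a witness. Put an unconstrained \odrl{meteredTime} or \odrl{delayPeriod} at $0$, and an unconstrained \odrl{elapsedTime} at the maximum of the others.

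\textbf{The $\Mod$ step you flag.} The step you single out as the main obstacle closes at once if you define $W$ as the projection onto $\now$ of the \emph{entire} non-periodic system (\odrl{dateTime} bounds, duration constraints and $\Phi$). The periodic constraints mention only $\now$, so $S$ is satisfiable iff $S_1\cap S_2$ meets $W$. $W$ is an interval because the solution set of a difference system is convex, which is what \cref{lem:windowed-rec} requires. The separate negative-cycle check on the durations is then implied by $W\neq\emptyset$. This improves on the paper, whose proof just asserts that a trace satisfies both policies iff it meets every axis.

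\textbf{The submonoid fallback.} Your fallback for a proper submonoid $D$ is unnecessary, because \cref{def:structures} forces $D\cong\mathbb{Q}_{\geq0}$. Every nonnegative rational is a difference of two instants of the affine timeline, and $D$ is closed under such differences. As stated, the fallback would not work anyway: shortest-path values involve negated constants and, with strict edges, an $\epsilon$-perturbation.
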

\noindent
\begin{corollary}[Product exactness]\label{cor:product-exact}
If $\tier(S)=\Ord$, the difference graph of \cref{thm:tiered} has no cross-operand edge, so it
decomposes operand by operand and conflict is decided by the per-operand criterion
(\cref{thm:criterion}). The product denotation is then exact, while in $\Diff$ and $\Mod$ it
over-approximates.
\end{corollary}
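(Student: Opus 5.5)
The plan is to unfold \cref{def:tier} in the case $\tier(S)=\Ord$ and read off the structure of the difference graph built in the proof of \cref{thm:tiered}. Its vertices are the operand values (and, where present, $\ctx.\now$, $\ctx.\trig$, $a_0$, and a zero node $z$); its edges come from two sources: literal bounds $x\bowtie v$, which become edges between $x$ and $z$, and relational atoms, namely differences $\diff(s,t)$ over constrained instants and the cross-operand relations of $\Phi$.

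\emph{First step: no cross-operand edge when $\tier(S)=\Ord$.} Since $\tier(S)\neq\Mod$, no operand is periodic, so \cref{lem:rec} contributes nothing. Since $\tier(S)\neq\Diff$, $S$ contains no difference over a constrained instant and no \emph{active} cross-operand relation. The relations of $\Phi$ that are inactive, such as $m\le e$ when $m$ or $e$ is unconstrained, or $a_0\le\trig\le\now$ when only $\now$ is bounded, involve at least one unconstrained variable. That variable can always be chosen to satisfy them: take $e$ large, $m$ at its lower bound $0$, and $\trig=a_0$ early, using unboundedness of $\timeT$ and of $\durD$ upward. So these edges can be deleted without changing satisfiability. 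What remains are only edges to $z$, and the graph is a union of stars, one per operand, meeting only at $z$.

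\emph{Second step: decomposition.} A negative cycle must pass through $z$ twice via a single operand's vertex, since there is no other connection. It therefore consists of one upper-bound edge and one lower-bound edge on the same operand $\ell$, that is, $u\prec l$ for bounds from $C_1\cup C_2$. For a single policy, admissibility together with \cref{lem:normalisation} handles the case within one side; across the two sides, \cref{thm:criterion} applied to the normalised intervals gives exactly the disjointness test. Consequently $\sem{C_1}\cap\sem{C_2}\cap\sem{\Phi}=\sem{C_1}\cap\sem{C_2}$ up to the eliminable unconstrained variables. The product denotation of \cref{def:product-denotation} is therefore exact: $S$ is unsatisfiable iff the product is empty iff some factor is empty.

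\emph{Third step: over-approximation in $\Diff$ and $\Mod$.} Because $\sem{\Phi}$ only removes points, we always have $\sem{C_1}\cap\sem{C_2}\supseteq\sem{C_1}\cap\sem{C_2}\cap\sem{\Phi}$, so the product is an over-approximation in all tiers. To show it can be strict, we exhibit witnesses. For $\Diff$, the running $m=\textsf{P30D}$, $e\le\textsf{P10D}$ pair has a nonempty product but is ruled out by $\Phi$. For $\Mod$, \cref{ex:bsb}'s recurrences with offset $516$ have a nonempty window product but are disjoint by \cref{lem:rec}. Strictly speaking, for $\Mod$ the instant axis $N(C)$ already folds in periodicity; the over-approximation there is relative to the combination with the difference system (the window $W$), which we would note.

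The main obstacle is the first step. It requires making precise that inactive $\Phi$-relations and the $\trig$/$a_0$ chain never induce edges that matter. I would handle this by an explicit extension argument: given a solution of the literal bounds, assign the unconstrained variables witnessing values, using that $\durD$ has no upper bound needed and $\timeT\cong\mathbb{Q}$ is unbounded below.
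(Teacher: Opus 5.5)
Your proposal is correct and follows the paper's route: the corollary is just the $\Ord$ case of the proof of \cref{thm:tiered}. Atoms that bound one operand by a literal give no cross-operand edge, so unsatisfiability reduces to per-operand disjointness via \cref{thm:criterion}, and your extension argument for the inactive relations of $\Phi$ and the chain $a_0\le\ctx.\trig\le\ctx.\now$ supplies the justification for ``no cross-operand edge'' that the paper only asserts.

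One slip is your $\Mod$ strictness witness. $N(C)$ already contains the periodic set, and the \odrl{dateTime} windows of \cref{ex:bsb} are disjoint anyway, so the product for the offset-$516$ recurrences is itself empty and witnesses nothing. A correct witness adds the same \odrl{timeInterval} (same period and anchor) to both sides of the metered/elapsed pair: the resulting system is $\Mod$-tier, its product is nonempty, and $\Phi$ still makes it unsatisfiable. That said, the stated claim needs only your inclusion $\supseteq$, so the witnesses are extra.
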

\noindent
In \cref{ex:bsb} the \odrl{dateTime} and \odrl{elapsedTime} pairs are $\Ord$, the
\odrl{timeInterval} pair is $\Mod$, and the metered-against-elapsed extension of \cref{def:frame}
is $\Diff$, where $30\leq m\leq e\leq 10$ is a negative cycle.
\noindent
Full metric temporal logic over dense time is undecidable~\cite{ouaknine2005decidability}, so we stay in decidable fragments: $\Diff$ in difference logic and $\Mod$ in Presburger arithmetic over integer period indices, with the time model dense throughout.
\vspace{-10pt}
\section{Conflict Detection, Composition, and Runtime}
\label{sec:conflict}
\vspace{-10pt}
The three verdicts form a totally ordered algebra in which conjunction and disjunction are the
strong Kleene meet and join.
\begin{definition}[Verdict algebra]\label{def:verdict-algebra}
$\Verdicts=\{\Conflict,\Unknown,\Compatible\}$ is totally ordered by
$\Conflict\sqsubset\Unknown\sqsubset\Compatible$. Conjunction $\wedge_K$ is $\min$ and disjunction
$\vee_K$ is $\max$ under $\sqsubset$.
\end{definition}
\noindent
$\Conflict$ and $\Compatible$ are determinate; $\Unknown$ is epistemic.
\begin{definition}[Per-operand verdict]\label{def:operand-verdict}
For an operand $\ell$ constrained by at least one of $C_1,C_2$,
\[
V_\ell=
\begin{cases}
\Conflict   & \text{both constrain }\ell\text{ and their denotations on }\ell\text{ are disjoint},\\
\Compatible & \text{both constrain }\ell\text{ and their denotations on }\ell\text{ intersect},\\
\Unknown    & \text{exactly one of }C_1,C_2\text{ constrains }\ell.
\end{cases}
\]
Here $\ell$ ranges over the duration operands \odrl{elapsedTime}, \odrl{meteredTime},
\odrl{delayPeriod}; the denotations on $\ell$ are the normalised intervals
$I_\ell(C_1),I_\ell(C_2)$ and disjointness is decided by \cref{thm:criterion}. The
instant operands \odrl{dateTime} and \odrl{timeInterval} are handled jointly by the
instant-axis verdict (\cref{def:instant-verdict}).
\end{definition}
\noindent
\begin{definition}[Instant-axis verdict]\label{def:instant-verdict}
Say a policy \emph{constrains the instant axis} when it has a \odrl{dateTime} or a
\odrl{timeInterval} constraint. With $N(\cdot)$ as in \cref{def:now-denotation},
\[
V_{\mathsf{Inst}}=
\begin{cases}
\Conflict   & \text{both }C_1,C_2\text{ constrain the instant axis and }N(C_1)\cap N(C_2)=\emptyset,\\
\Compatible & \text{both constrain the instant axis and }N(C_1)\cap N(C_2)\neq\emptyset,\\
\Unknown    & \text{exactly one of }C_1,C_2\text{ constrains the instant axis.}
\end{cases}
\]
\end{definition}
\begin{definition}[Product (policy) verdict]\label{def:product-verdict}
For $C_1,C_2$, the \emph{policy verdict} is
\[
\prodV(C_1,C_2)=\min\bigl(V_{\mathsf{Inst}},\ \min_{\ell}V_\ell\bigr),
\]
the minimum of the instant-axis verdict (\cref{def:instant-verdict}) and the
per-operand verdicts (\cref{def:operand-verdict}) over the duration operands
constrained by at least one of $C_1,C_2$. \odrl{dateTime} and \odrl{timeInterval}
contribute only through $V_{\mathsf{Inst}}$, not as separate product axes.
\end{definition}
\noindent
On operands both policies constrain, the verdict is determinate and, in the $\Ord$
tier, agrees with the conflict of \cref{def:frame}. On an operand only one policy
constrains it is $\Unknown$, a case \cref{def:frame} does not separate from
compatibility. The $\Diff$ and $\Mod$ tiers use \cref{thm:tiered}.
\noindent In \cref{ex:bsb}, $V_{\mathsf{Inst}}=\Conflict$ (the offer's window-and-recurrence set
and the request's share no instant), $V_{\odrl{elapsedTime}}=\Compatible$,
$V_{\odrl{delayPeriod}}=\Conflict$, and $V_{\odrl{meteredTime}}=\Unknown$ (only the offer
constrains it), so $\prodV=\Conflict$.
\begin{proposition}[Monotonicity]\label{prop:monotone}
If $C_1$ and $C_2$ both constrain $\ell$, $I_\ell(C_1')\subseteq I_\ell(C_1)$, and
the other operands are unchanged, then $\prodV(C_1',C_2)\sqsubseteq\prodV(C_1,C_2)$.
\end{proposition}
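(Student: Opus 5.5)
Since $\prodV$ is a $\min$ of component verdicts (\cref{def:product-verdict}), and $\min$ is monotone in each argument under $\sqsubset$, it suffices to show that every component verdict for $(C_1',C_2)$ lies $\sqsubseteq$ the corresponding one for $(C_1,C_2)$. The components for operands other than $\ell$ are unchanged by hypothesis: their normalised intervals and the instant-axis denotation (when $\ell$ is a duration operand) are untouched, so those verdicts coincide. The whole argument therefore reduces to comparing the verdict on the component containing $\ell$.

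First I fix which component that is. If $\ell$ is a duration operand, the relevant verdict is $V_\ell$ (\cref{def:operand-verdict}). If $\ell=\odrl{dateTime}$, it is $V_{\mathsf{Inst}}$ (\cref{def:instant-verdict}). In the second case I note that $N(C_1')=I_{\odrl{dateTime}}(C_1')\cap S(C_1')\subseteq I_{\odrl{dateTime}}(C_1)\cap S(C_1)=N(C_1)$, because the periodic part $S$ is among the unchanged data. Either way I obtain a shrinking $X'\subseteq X$ of $C_1$'s denotation on the component, while $C_2$'s denotation $Y$ is fixed.

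Next comes the case split on the original verdict. Since both policies constrain $\ell$ by hypothesis, and $C_1'$ still constrains $\ell$ (it has a strictly meaningful interval $I_\ell(C_1')$ with the same constraining status), the verdict is determinate in both situations: never $\Unknown$ on this component. If the original verdict is $\Conflict$, then $X\cap Y=\emptyset$, so $X'\cap Y\subseteq X\cap Y=\emptyset$ and the new verdict is again $\Conflict$. If the original is $\Compatible$, any new value is $\sqsubseteq\Compatible$ trivially, since $\Compatible$ is the top of $\Verdicts$. Combining the unchanged components, the $\ell$-component inequality, and monotonicity of $\min$ gives $\prodV(C_1',C_2)\sqsubseteq\prodV(C_1,C_2)$.

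The main obstacle is the bookkeeping about ``constrains'': I must ensure that $C_1'$ still counts as constraining $\ell$. Otherwise $I_\ell(C_1')=D_\ell$, which contradicts $\subseteq$ unless $I_\ell(C_1)=D_\ell$, and the new verdict could become $\Unknown$ while the old one was $\Compatible$ (still fine) or $\Conflict$ (which cannot happen when $X=D_\ell$ and $Y\neq\emptyset$). I would handle this by reading ``$C_1,C_2$ both constrain $\ell$'' as applying to the refined pair as well, or by checking the degenerate case directly as just sketched. I also need to treat the possibly empty $X'$ (an internally inconsistent $C_1'$): it is disjoint from $Y$ and so yields $\Conflict$, which is consistent with the inequality.
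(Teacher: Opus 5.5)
Your argument is correct and is essentially the paper's own: shrinking $C_1$'s side keeps a disjoint pair disjoint, $\Compatible$ is the top of $\Verdicts$, the other components are fixed, and $\min$ is monotone. Routing the \odrl{dateTime} case through $N(\cdot)$ is a harmless addition. Of your two treatments of the degenerate case, only the first works: reading the hypothesis as saying that $C_1'$ still constrains $\ell$, which the paper also assumes tacitly. Your direct check needs $I_\ell(C_2)\neq\emptyset$, and that can fail. For example, let $C_1$'s only \odrl{delayPeriod} constraint be \odrl{gteq}~\textsf{PT0S}, let $C_1'$ drop it, and let $C_2$ contain both \odrl{delayPeriod}~\odrl{eq}~\textsf{P1D} and \odrl{delayPeriod}~\odrl{eq}~\textsf{P2D}; then the verdict on $\ell$ rises from \Conflict\ to \Unknown.
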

\begin{proof}
On $\ell$, shrinking to $I_\ell(C_1')\subseteq I_\ell(C_1)$ keeps a disjoint pair
disjoint and can only turn an intersecting pair disjoint, so $V_\ell'\sqsubseteq
V_\ell$. The other operands are unchanged, so
$\prodV(C_1',C_2)=\min_\ell V_\ell'\sqsubseteq\min_\ell V_\ell=\prodV(C_1,C_2)$.
\end{proof}
\noindent
For instance, an offer permitting \odrl{dateTime} until \textsf{2026-12-31} is
\Compatible\ with a request fixing \textsf{2026-06-01}; tightening the deadline to
\textsf{2026-03-01} excludes that date, giving \Conflict.
\noindent
An $\Unknown$ verdict is undetermined: completing the silent policy, that is, adding a
constraint for the operand it leaves open, can make the pair $\Compatible$ or $\Conflict$.
\begin{definition}[Completion]\label{def:completion}
A \emph{completion} of $C_1,C_2$ is a pair $(\hat C_1,\hat C_2)$ with
$C_i\subseteq\hat C_i$ that adds, for every operand $\ell$ unconstrained by $C_i$,
exactly one admissible constraint targeting $\ell$.
\end{definition}
\begin{theorem}[Unknown soundness]\label{thm:unknown-sound}
$\prodV(C_1,C_2)=\Unknown$ iff no operand constrained by both policies yields
$\Conflict$ and some operand is constrained by exactly one of them. In that case a
completion with $\prodV=\Compatible$ exists, and, provided some operand $\ell$
constrained by exactly one policy $C_j$ has $I_\ell(C_j)\neq D_\ell$, a completion
with $\prodV=\Conflict$ exists.
\end{theorem}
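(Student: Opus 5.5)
The plan has three parts, matching the three claims of \cref{thm:unknown-sound}: characterise $\Unknown$ by unfolding $\min$, build a $\Compatible$ completion, and build a $\Conflict$ completion.

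\textbf{Characterisation.} By \cref{def:product-verdict}, $\prodV$ is the $\min$ under $\Conflict\sqsubset\Unknown\sqsubset\Compatible$ of the instant-axis verdict and the per-operand verdicts. I treat $V_{\mathsf{Inst}}$ as one more operand verdict, with the instant axis playing the role of an operand. By \cref{def:operand-verdict,def:instant-verdict}, each component is $\Conflict$ or $\Compatible$ when both policies constrain it, and $\Unknown$ exactly when one does. A minimum over a totally ordered set equals $\Unknown$ iff no component is $\Conflict$ and at least one component is $\Unknown$. This gives the first sentence. I implicitly assume the minimum is over a nonempty family, which holds because some operand is constrained by exactly one policy.

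\textbf{Compatible completion.} For each operand $\ell$ left open by $C_i$ but constrained by $C_j$, pick a witness $x\in I_\ell(C_j)$; this set is nonempty because the components constrained by both policies are determinate and not $\Conflict$. Add to $\hat C_i$ the constraint $(\ell,\odrl{eq},x)$. This is admissible since $\odrl{eq}\in\mathit{Adm}(\ell)$ for every operand. Then $I_\ell(\hat C_i)=\{x\}$ meets $I_\ell(C_j)$, so $V_\ell=\Compatible$.

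For operands open in both policies, add the same admissible constraint to both sides, for instance $\odrl{eq}$ with a common value. Operands already shared keep their $\Compatible$ verdict because those components are unchanged. Hence $\prodV=\Compatible$.

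\textbf{The instant axis is the delicate case.} Suppose only $C_j$ constrains it, through \odrl{dateTime} and/or \odrl{timeInterval}. Pick $t\in N(C_j)$ and add $(\odrl{dateTime},\odrl{eq},t)$ to $\hat C_i$. I need $N(C_j)\neq\emptyset$ for this. I expect this to be the main obstacle: a single policy's window and recurrence might already be disjoint (\cref{lem:windowed-rec}). I would handle it by noting that in that case the pair is not really $\Unknown$ in the intended sense, or by adding a hypothesis that each $N(C_i)$ is nonempty, in line with admissibility.

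Completion also requires one constraint for \odrl{timeInterval} when $C_i$ leaves it open. If $C_j$ has period $p$, I add $(\odrl{timeInterval},\odrl{eq},p)$ to $\hat C_i$. This must be compatible with $t$, which requires $t\in\Per(\dstart_{P_i},p)$, but the anchor is fixed by the context. This is the hardest step. I would first try choosing $t$ in $\Per(\dstart_{P_j},p)\cap\Per(\dstart_{P_i},p)\cap W(C_j)$. If the anchors are incongruent modulo $p$, I would instead pick a period $p'$ with $\gcd(p,p')\mid(\dstart_{P_i}-\dstart_{P_j})$, using \cref{lem:rec}. One such choice is $p'=p/m$, where $m$ is chosen so that $p/m$ divides both $p$ and the anchor offset. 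Since time is affine over $\mathbb{Q}$, a common rational refinement always exists. The windowed intersection is then nonempty by \cref{lem:windowed-rec}, because the lcm-periodic set meets any nondegenerate window. A degenerate window (an \odrl{eq} date) would be handled by choosing $p'$ so that the date lies in the progression.

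\textbf{Conflict completion.} Take the operand $\ell$ with $I_\ell(C_j)\neq D_\ell$ and pick $y\in D_\ell\setminus I_\ell(C_j)$. Add $(\ell,\odrl{eq},y)$ to $\hat C_i$; it is admissible since $\{y\}$ is nonempty. Then $I_\ell(\hat C_i)\cap I_\ell(C_j)=\emptyset$, so $V_\ell=\Conflict$ and the minimum is $\Conflict$ regardless of the other components, which may be completed arbitrarily. If $\ell$ is the instant axis, use a \odrl{dateTime} value outside $N(C_j)$. If \odrl{timeInterval} is involved, use a period $p'$ whose gcd with $p$ does not divide the anchor offset; such a $p'$ exists when $N(C_j)\neq\timeT$.
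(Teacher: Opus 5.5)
Your route is the paper's. You unfold the $\min$, treating the instant axis as a single operand, which is in fact needed for the first sentence to be an equivalence. You complete with \odrl{eq}-witnesses inside the other side's set for \Compatible\ and outside it for \Conflict, and you handle the instant axis through $(\odrl{dateTime},\odrl{eq},t)$ with $t\in N(C_j)$ or $t\notin N(C_j)$.

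The genuine gap is the sentence ``$I_\ell(C_j)$ is nonempty because the components constrained by both policies are determinate and not \Conflict''. Here $\ell$ is constrained by $C_j$ alone, so that hypothesis says nothing about it, and the claim can fail. Take $C_1=\{(\odrl{elapsedTime},\odrl{lteq},\textsf{P5D}),(\odrl{elapsedTime},\odrl{eq},\textsf{P10D}),(\odrl{dateTime},\odrl{lt},\textsf{2026-12-31})\}$ and $C_2=\{(\odrl{dateTime},\odrl{eq},\textsf{2026-06-01})\}$. Every constraint is admissible, $V_{\mathsf{Inst}}=\Compatible$ and $V_{\odrl{elapsedTime}}=\Unknown$, so $\prodV=\Unknown$. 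But $I_{\odrl{elapsedTime}}(\hat C_1)=\emptyset$ in every completion, so every completion has $\prodV=\Conflict$. The \Compatible\ half of the statement therefore needs the hypothesis that each policy is consistent on every axis, $I_\ell(C_j)\neq\emptyset$ and $N(C_j)\neq\emptyset$. You proposed this only for the instant axis; it is needed for the duration operands too. The paper's proof takes the same witness without comment, so it carries this assumption tacitly.

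Your observation that \cref{def:completion} also forces a \odrl{timeInterval} constraint is correct, and the paper's proof passes over it. Several of your justifications there are wrong, though. ``The lcm-periodic set meets any nondegenerate window'' is false, since a window shorter than $\operatorname{lcm}(p,p')$ can miss every occurrence. Your choice $p'=p/m$ works for a different reason: $p'$ divides both $p$ and the anchor offset, so $\Per(\dstart_{P_j},p)\subseteq\Per(\dstart_{P_i},p')$ and the windowed intersection is $N(C_j)$ itself.

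It is simpler to fix the witness first:
\begin{itemize}
\item take $t\in N(C_j)$ (or $t\in N(C_1)\cap N(C_2)$ if both policies constrain the axis, any $t$ if neither does);
\item add $(\odrl{dateTime},\odrl{eq},t)$ to every policy lacking \odrl{dateTime};
\item add $(\odrl{timeInterval},\odrl{eq},|t-\dstart_{P_i}|)$ to every $P_i$ lacking \odrl{timeInterval}. This period is a nonnegative difference of instants, hence a duration by \cref{def:structures}; use any positive period if $t=\dstart_{P_i}$.
\end{itemize}
Then $t$ lies in both completed instant-axis sets.

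This also repairs two claims that fail as written. Adding ``the same constraint to both sides'' need not yield a common occurrence for \odrl{timeInterval} when the anchors differ. And the instant axis is not ``unchanged'' whenever a policy lacks one of \odrl{dateTime} or \odrl{timeInterval}, since completion then adds to it and shrinks its $N$.

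For \Conflict, $(\odrl{dateTime},\odrl{eq},t)$ with $t\notin N(C_j)$ already suffices, because $N(\hat C_i)\subseteq\{t\}$ and $N(\hat C_j)\subseteq N(C_j)$. Your alternative via a period $p'$ with $\gcd(p,p')\nmid(\dstart_{P_i}-\dstart_{P_j})$ fails when the anchors coincide, because every $\gcd$ divides $0$.
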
 \noindent
The proof is in \cref{app:proofs}. For instance, when only one policy bounds an operand, completing
the other with an overlapping bound on it yields \Compatible\ and with a disjoint bound \Conflict.
\vspace{-10pt}
\subsection{Logical Composition}
\label{sec:composition}
The logical operands \odrl{and}, \odrl{or}, and \odrl{xone} combine constraints evaluated at a
single exercise, by intersection, union, and exactly-one-of on their denotations. \odrl{andSequence}
sequences constraints over a trace rather than at a single exercise, and is treated separately.
\begin{definition}[Logical composition]\label{def:composition}
Over one exercise,
$\sem{\odrl{and}(C_1,\dots,C_n)}=\bigcap_i\sem{C_i}$,
$\sem{\odrl{or}(C_1,\dots,C_n)}=\bigcup_i\sem{C_i}$, and
$\sem{\odrl{xone}(C_1,\dots,C_n)}
=\bigcup_i\big(\sem{C_i}\setminus\bigcup_{j\neq i}\sem{C_j}\big)$.
\end{definition}
\noindent
A disjunctive policy denotes a union of products, so its verdict against another
policy is determined by the verdicts of all branch pairs.
\begin{definition}[Branch]\label{def:branch}
A \emph{branch} is a disjunct of a policy's \odrl{or} or \odrl{xone} constraint, a
constraint or conjunction, whose product verdict against another branch is given by
\cref{def:product-verdict}. A \emph{branch pair} $(B_i,B_j)$ takes one branch from
each policy.
\end{definition}
\noindent
\begin{definition}[Disjunction and exclusive-disjunction verdicts]\label{def:or-verdict}
$\verdict_{\odrl{or}}$ is \Compatible\ if $\prodV(B_i,B_j)=\Compatible$ for some
pair, \Conflict\ if every pair is \Conflict, and \Unknown\ otherwise. For
\odrl{xone} the pair verdicts give only the sound rule $\verdict_{\odrl{xone}}
=\Conflict$ when every pair is \Conflict; otherwise the verdict is decided on the
\odrl{xone} denotation (\cref{def:composition}): \Compatible\ when
$\sem{\odrl{xone}(C_1,\dots,C_n)}\cap\sem{\odrl{xone}(C_1',\dots,C_m')}\neq\emptyset$, and \Conflict\ when it is empty. Counting \Compatible\ pairs decides neither
direction, since a satisfying point must lie in exactly one branch of each side.
\end{definition}
\noindent
For each combinator a \Conflict\ verdict reduces to disjointness of the combined
denotations, since intersection distributes over the unions that \odrl{or} and
\odrl{xone} introduce.
\begin{theorem}[Composition soundness]\label{thm:composition-soundness}
A \Conflict\ verdict under \odrl{and}, \odrl{or}, or \odrl{xone} implies that the
two policies' denotations are disjoint.
\end{theorem}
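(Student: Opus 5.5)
We argue by cases on the combinator, reducing each \Conflict\ verdict to the emptiness of an intersection of product denotations. Throughout we use that a branch is a conjunction, so its denotation is its product denotation $\sem{B}$ (\cref{def:product-denotation}). The claim then rests on one base fact: for conjunctions $B,B'$, $\prodV(B,B')=\Conflict$ implies $\sem{B}\cap\sem{B'}=\emptyset$.

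\emph{Base case (\odrl{and}).} By \cref{def:product-verdict}, $\prodV=\min(V_{\mathsf{Inst}},\min_\ell V_\ell)$ equals \Conflict\ only if some component is \Conflict, since \Conflict\ is the bottom of $\Verdicts$ (\cref{def:verdict-algebra}).
\begin{itemize}
\item If some duration operand has $V_\ell=\Conflict$, then by \cref{def:operand-verdict} and \cref{thm:criterion} we have $I_\ell(B)\cap I_\ell(B')=\emptyset$.
\item If instead $V_{\mathsf{Inst}}=\Conflict$, then $N(B)\cap N(B')=\emptyset$ (\cref{def:instant-verdict}).
\end{itemize}
Since the intersection of two products is the product of the componentwise intersections, one empty factor makes $\sem{B}\cap\sem{B'}$ empty. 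For an \odrl{and} of several constraints on each side, the denotation is $\bigcap_i\sem{C_i}$ (\cref{def:composition}), which is again a conjunction's product denotation, so the same argument applies.

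\emph{\odrl{or}.} By \cref{def:or-verdict}, $\verdict_{\odrl{or}}=\Conflict$ means every branch pair $(B_i,B'_j)$ has $\prodV(B_i,B'_j)=\Conflict$. The base case gives $\sem{B_i}\cap\sem{B'_j}=\emptyset$ for all $i,j$. Intersection distributes over the unions of \cref{def:composition}:
\[
\textstyle\bigl(\bigcup_i\sem{B_i}\bigr)\cap\bigl(\bigcup_j\sem{B'_j}\bigr)=\bigcup_{i,j}\bigl(\sem{B_i}\cap\sem{B'_j}\bigr)=\emptyset .
\]

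\emph{\odrl{xone}.} There are two ways to obtain \Conflict.
\begin{itemize}
\item If every pair is \Conflict, note that $\sem{\odrl{xone}(\ldots)}\subseteq\bigcup_i\sem{B_i}$ and likewise on the other side. The \odrl{or} argument then shows that the two unions are disjoint, and so are their subsets.
\item Otherwise the verdict is decided directly on the \odrl{xone} denotations. Here \Conflict\ is returned exactly when their intersection is empty, so disjointness holds by definition.
\end{itemize}

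\emph{Main obstacle.} The delicate step is the base case. The verdict's instant axis folds \odrl{dateTime} and \odrl{timeInterval} into $N(\cdot)$, and we must check that a \Conflict\ there really corresponds to an empty product factor. We handle this by reading $N(C)$ as the first factor of $\sem{C}$ exactly as in \cref{def:product-denotation}. The periodic case is then covered because \cref{def:instant-verdict} tests $N(B)\cap N(B')$ directly; where this is computed via \cref{lem:rec} and \cref{lem:windowed-rec}, those lemmas give the exact intersection.

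We also check that \Unknown\ components cannot produce \Conflict. This holds because $\min$ reaches \Conflict\ only through a determinate \Conflict\ component.

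Finally, soundness here is stated for the denotations alone, without $\Phi$. This only strengthens the conclusion relative to \cref{def:frame}, since disjointness of $\sem{C_1}$ and $\sem{C_2}$ implies $\sem{C_1}\cap\sem{C_2}\cap\sem{\Phi}=\emptyset$.
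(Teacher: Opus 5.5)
Your proposal is correct and follows the paper's proof essentially step for step. For \odrl{and}, an empty factor empties the product. For \odrl{or}, intersection distributes over the branch unions. For \odrl{xone}, the verdict is either read directly as emptiness of the \odrl{xone} intersection, or follows because the \odrl{xone} sets lie inside the branch unions when every pair conflicts. Your explicit handling of the instant-axis component $V_{\mathsf{Inst}}$, which the paper's proof folds into ``some $V_\ell=\Conflict$'', tightens the argument without changing the route.
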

\noindent The proof is in \cref{app:proofs}.
\noindent
The product verdict (\cref{def:product-verdict}) is sound but not complete: it ranges
over operands independently and so omits the cross-operand relations of $\Phi$, and a
conflict arising only through such a relation is missed. The $\Diff$ tier of \cref{thm:tiered} catches exactly the cross-operand conflicts, and the $\Mod$ tier the periodic ones.
\vspace{-10pt}
\subsection{Sequences}
\label{sec:sequences}
Sequencing is evaluated over a trace, the timestamped states of a single execution.
\begin{definition}[Trace]\label{def:trace}
A \emph{trace} is a sequence $\Trace=\langle S_1,\dots,S_z\rangle$ with strictly
increasing timestamps $t_{S_1}<\cdots<t_{S_z}$, each $S_i$ inducing a context
$\ctx_i$ with $\ctx_i.\now=t_{S_i}$.
\end{definition}

\noindent
\odrl{andSequence} requires its constraints to be satisfied in order along the trace,
with each delay measured from the previous satisfaction. It is the only logical operand in which a later constraint depends on an earlier instant.

\begin{definition}[\odrl{andSequence}]\label{def:andseq}
$\odrl{andSequence}(C_1,\dots,C_n)$ holds on $\Trace$ iff there are indices
$1\leq j_1\leq\cdots\leq j_n\leq z$ such that each $C_k$ is satisfied at $S_{j_k}$ in
the context $\ctx_{j_k}$ whose trigger is bound to the preceding satisfaction instant,
$\ctx_{j_k}.\trig=t_{S_{j_{k-1}}}$, with $\trig$ free for $k=1$. It is
\emph{unresolvable} on $\Trace$ when no such non-decreasing tuple exists.
\end{definition}
\noindent
The index order $j_1\leq\cdots\leq j_n$ is non-strict, so two consecutive constraints may be
satisfied at the same state. A step forces strict separation $j_{k-1}<j_k$ exactly when it
imposes a positive delay.
\begin{lemma}[Delay-induced separation]\label{lem:andseq-strict}
If $C_k$ depends on the trigger $\ctx_{j_k}.\trig=t_{S_{j_{k-1}}}$ and forces
$\diff(\ctx_{j_k}.\trig,\ctx_{j_k}.\now)>0$ (e.g.\ $\odrl{delayPeriod}~\odrl{gt}~r$
with $r\ge0$, $\odrl{delayPeriod}~\odrl{gteq}~r$ with $r>0$, or
$\odrl{delayPeriod}~\odrl{eq}~r$ with $r>0$), then every satisfying tuple has
$t_{S_{j_k}}>t_{S_{j_{k-1}}}$, hence $j_{k-1}<j_k$.
\end{lemma}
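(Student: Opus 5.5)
The plan is to argue directly from the semantics of the delay operand and the ordering of the timeline. Fix a satisfying tuple $j_1\le\cdots\le j_n$ for $\odrl{andSequence}(C_1,\dots,C_n)$ on $\Trace$ (\cref{def:andseq}), and look at the step $k\ge2$ in question. By \cref{def:andseq}, $C_k$ holds at $S_{j_k}$ in the context $\ctx_{j_k}$ with $\ctx_{j_k}.\now=t_{S_{j_k}}$ (\cref{def:trace}) and $\ctx_{j_k}.\trig=t_{S_{j_{k-1}}}$.

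The first step is to show that each listed delay constraint forces the elapsed duration to be positive. I take the \odrl{delayPeriod} value in a context to be $\diff(\ctx.\trig,\ctx.\now)$, the elapsed duration from trigger to now. This is well defined because $\Phi$ contains $\ctx.\trig\le_\timeT\ctx.\now$ (\cref{def:frame}), so by \cref{def:structures} the difference lies in $D$. Each example then follows from its denotation in \cref{def:denotation}:
\begin{itemize}
\item $\odrl{delayPeriod}~\odrl{gt}~r$ with $r\ge0$ gives $\diff>r\ge0$;
\item $\odrl{delayPeriod}~\odrl{gteq}~r$ with $r>0$ gives $\diff\ge r>0$;
\item $\odrl{delayPeriod}~\odrl{eq}~r$ with $r>0$ gives $\diff=r>0$.
\end{itemize}
All three use the order embedding of $\durD$ into $\mathbb{Q}_{\ge0}$. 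In the general case, positivity is simply the hypothesis.

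The second step converts positivity into strict order of instants. From $\diff(t_{S_{j_{k-1}}},t_{S_{j_k}})=t_{S_{j_k}}-t_{S_{j_{k-1}}}>0$, the affine structure of $\timeT$ over $(\mathbb{Q},+)$ gives $t_{S_{j_k}}=t_{S_{j_{k-1}}}+d$ with $d>0$. Fixing an origin identifies the timeline order-preservingly with $\mathbb{Q}$, so $t_{S_{j_k}}>t_{S_{j_{k-1}}}$.

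The last step turns the strict timestamp order into a strict index order. Timestamps along a trace are strictly increasing (\cref{def:trace}), so the map $i\mapsto t_{S_i}$ is strictly monotone and hence order-reflecting. If $j_k\le j_{k-1}$ held, we would get $t_{S_{j_k}}\le t_{S_{j_{k-1}}}$, a contradiction. Therefore $j_{k-1}<j_k$.

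The main obstacle is the first step: fixing how \odrl{delayPeriod} is evaluated against the trigger. The paper has so far only said that the delay is counted from the preceding satisfaction, so I will make explicit that $\sem{\odrl{delayPeriod}}_{\ctx}=\diff(\ctx.\trig,\ctx.\now)$, relying on \cref{def:andseq}. Everything after that is order-theoretic routine.
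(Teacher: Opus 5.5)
Your proposal is correct and follows the paper's own argument. The forced positivity $\diff(\ctx_{j_k}.\trig,\ctx_{j_k}.\now)=t_{S_{j_k}}-t_{S_{j_{k-1}}}>0$ gives the strict timestamp order, and the strictly increasing timestamps of $\Trace$ then yield $j_{k-1}<j_k$. Your check of the three \odrl{delayPeriod} examples is detail the paper leaves implicit, while your appeal to $\Phi$ for well-definedness is unnecessary: $\diff$ is already a rational difference, and the hypothesis that $C_k$ forces it positive does that work.
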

\begin{proof}
At such a step $\diff(\ctx_{j_k}.\trig,\ctx_{j_k}.\now)=t_{S_{j_k}}-t_{S_{j_{k-1}}}>0$,
so $t_{S_{j_k}}>t_{S_{j_{k-1}}}$; since $\Trace$ has strictly increasing timestamps,
$j_{k-1}<j_k$.
\end{proof}
\noindent
Strict separation thus holds exactly at steps carrying a positive delay, and all other steps
order non-strictly. For instance,
$\odrl{andSequence}(\odrl{dateTime}~\odrl{gteq}~\textsf{2026-01-01},\
\odrl{delayPeriod}~\odrl{gteq}~\textsf{P1D})$ requires the second constraint to be satisfied at
least \textsf{P1D} after the first. \noindent
\cref{prop:collapse} gives the single-state case; otherwise a sequence is decided over the trace as a difference system.
\begin{proposition}[Collapse to \odrl{and}]\label{prop:collapse}
If no $C_k$ depends on $\trig$, then for every state $S$, $\odrl{andSequence}(C_1,\dots,C_n)$
holds on the single-state trace $\langle S\rangle$ iff $\odrl{and}(C_1,\dots,C_n)$ holds at $S$,
equivalently when the two have the same per-state denotation $\bigcap_k\sem{C_k}$.
\end{proposition}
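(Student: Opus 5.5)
The plan is to unfold \cref{def:andseq} on the single-state trace $\langle S\rangle$ and compare the result directly with \cref{def:composition}. With $z=1$, the only index tuple allowed is $j_1=\cdots=j_n=1$. The constraint $1\le j_1\le\cdots\le j_n\le 1$ forces it, and it is admissible precisely because the index order is non-strict. So $\odrl{andSequence}(C_1,\dots,C_n)$ holds on $\langle S\rangle$ iff every $C_k$ is satisfied at $S$ in a context $\ctx_1^{(k)}$. Each such context agrees with $\ctx_1$ except that its trigger is set to $t_{S_{j_{k-1}}}=t_S$ for $k\ge2$, and is left free for $k=1$.

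The key step is to remove the trigger binding using the hypothesis that no $C_k$ depends on $\trig$. Independence means the truth value of $C_k$ at $S$ is the same under any two contexts that differ only in $\trig$. I make this precise as follows: each $\sem{C_k}$ is a subset of values of the operands read from the context (the instant axis $\ctx.\now$ and the duration operands), and none of those values is computed from $\trig$. Hence for all $k$, $C_k$ holds in $\ctx_1^{(k)}$ iff $C_k$ holds in $\ctx_1$. For $k=1$ the free trigger can be instantiated arbitrarily, for example as $t_S$, and the same invariance applies.

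It follows that $\odrl{andSequence}$ holds on $\langle S\rangle$ iff every $C_k$ holds at $S$ in $\ctx_1$. That is iff the value vector of $S$ lies in $\bigcap_k\sem{C_k}$, which by \cref{def:composition} is $\sem{\odrl{and}(C_1,\dots,C_n)}$. This yields both directions together with the stated equality of per-state denotations.

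I expect the main obstacle to be formalising ``does not depend on $\trig$''. The paper uses the phrase informally, as in \cref{lem:andseq-strict}. I would state it as the invariance property above and check that it is what the hypothesis means. The check is that $\odrl{delayPeriod}$ is the only operand measured from the trigger, via $\diff(\ctx.\trig,\ctx.\now)$, so excluding trigger-dependent constraints excludes exactly those. A remaining subtlety is the background relation $\ctx.\trig\le\ctx.\now$ in $\Phi$. Choosing $\trig=t_S$ satisfies it, so the binding introduces no inconsistency.
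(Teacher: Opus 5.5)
Your proposal is correct and follows the paper's proof. On $\langle S\rangle$ the constant tuple $j_1=\cdots=j_n=1$ is the only non-decreasing choice and is admissible, and trigger-independence makes each $C_k$ determined by the state, so the sequence holds iff $S\in\bigcap_k\sem{C_k}$. Your invariance reading of ``does not depend on $\trig$'' and your remark on $\ctx.\trig\le\ctx.\now$ spell out more carefully the step the paper states in one line.
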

\begin{proof}
With no constraint depending on $\trig$, each $\sem{C_k}_\ctx$ is determined by the state. On
$\langle S\rangle$ the constant tuple $j_1=\cdots=j_n=1$ is the only non-decreasing choice and is
admissible, so the sequence holds iff every $C_k$ holds at $S$, that is iff
$S\in\bigcap_k\sem{C_k}$.
\end{proof}
\noindent
The only operand whose value depends on $\ctx.\trig$ is \odrl{delayPeriod}. Under
\odrl{andSequence} its value $\diff(\ctx.\trig,\ctx.\now)$ is the duration from the trigger to
$\ctx.\now$, the one case in which positional order and metric delay are decided together.
Encoding the trigger as a timepoint and each delay as a difference classifies this system in the
$\Diff$ tier of \cref{thm:tiered}, where the sequence is unresolvable exactly when the difference
system is infeasible, a negative cycle. The other duration operands take their values from
context-fixed instants and introduce no cross-operand difference.
\vspace{-10pt}
\subsection{Static and Runtime Conflict}
\label{sec:static-runtime}
Static and runtime conflict are the same question under different quantification. A policy
pair is in \emph{static conflict} when the two are jointly unsatisfiable in the sense of
\cref{def:frame}, the context left symbolic: the negotiation-time check, before any access,
decided by \cref{thm:tiered}. In \cref{ex:bsb} the \odrl{dateTime} disjointness is a static
\Conflict\ for every exercise. A \emph{runtime conflict} holds when a concrete ground
context, or a trace under \odrl{andSequence}, fails a policy at enforcement, the duration
operands computed from actual timestamps and the metered total rather than searched; it
explores no executions, only checking observed usage against the constraints.
\begin{definition}[Ground evaluation]\label{def:runtime-eval}
A \emph{ground context} $\hat\ctx$ fixes concrete $\now,\trig\in\timeT$ and the access-begin $a_0$ of \cref{def:frame}. Over usage segments
$[s_1,e_1],\dots,[s_z,e_z]$, the duration operands are computed as
$\sem{\odrl{elapsedTime}}_{\hat\ctx}=\diff(a_0,\now)$,
$\sem{\odrl{delayPeriod}}_{\hat\ctx}=\diff(\trig,\now)$, and
$\sem{\odrl{meteredTime}}_{\hat\ctx}=\sum_i\diff(s_i,e_i)$, and $\hat\ctx\models C$ holds iff
$(\sem{\ell}_{\hat\ctx})_{\ell}\in\sem{C}$.
\end{definition}
\noindent
This evaluation is a ground computation. Its only arithmetic is $+$ over $\durD$, and it performs
no search. For instance, with $a_0=\textsf{2027-06-01}$ and $\now=\textsf{2027-06-09}$,
$\sem{\odrl{elapsedTime}}_{\hat\ctx}=\textsf{P8D}$, satisfying
$\odrl{elapsedTime}~\odrl{lteq}~\textsf{P10D}$.
\begin{definition}[Runtime conflict]\label{def:runtime-conflict}
A ground context $\hat\ctx$ is in \emph{runtime conflict} with $(C_1,C_2)$ iff
$\hat\ctx\not\models C_1$ or $\hat\ctx\not\models C_2$ (\cref{def:runtime-eval}).
\end{definition}
\noindent
A trace whose metered usage exceeds the requested cap is in runtime conflict even
where the static product was \Compatible.

\begin{definition}[Realizability]\label{def:realizability}
A policy $C$ is \emph{realizable} iff $\exists\Trace.\ \Trace\models C$.
\end{definition}

\noindent
Under \odrl{andSequence} with a trigger-dependent \odrl{delayPeriod}, realizability is
an $\exists$ over the trigger instants subject to the sequence order and the bound
$\diff(\trig,\now)\bowtie r$, a simple temporal network (STN) emptiness check; an
unresolvable sequence (\cref{def:andseq}) gives an unrealizable $C$. For instance, an
\odrl{andSequence} demanding $\odrl{delayPeriod}~\odrl{gteq}~\textsf{P2D}$ within a
one-day window is~unrealizable.
\noindent
\begin{definition}[Trace conflict]\label{def:trace-conflict}
For an \odrl{andSequence}, satisfaction is trace-based and
\cref{def:frame} does not apply. $C_1,C_2$ are in \emph{trace conflict} when no trace
satisfies both, equivalently when the joint difference system is infeasible
(\cref{thm:tiered}, $\Diff$ tier).
\end{definition}
\noindent
Static conflict and runtime conflict relate by refinement: static conflict is
conclusive, runtime conflict is per-context.
\begin{theorem}[Static--runtime bridge]\label{thm:static-runtime}
\emph{Soundness.} If $C_1,C_2$ are in static conflict, no exercise satisfies both, by product
emptiness (\cref{def:frame}) for single-state policies and trace conflict
(\cref{def:trace-conflict}) for an  \odrl{andSequence}, so a static \Conflict\ is
runtime-safe. \emph{Refinement, not reversal.} A static non-conflict does not make every exercise
compliant, since a particular $\hat\ctx$ may fail one policy, for instance actual metered usage
exceeding a cap the two policies can jointly meet; by soundness, no $\hat\ctx$ overturns a static
\Conflict. \emph{Theory.} Static conflict is the tiered decision (\cref{thm:tiered}), ground
evaluation computes only $\diff$ and the metered sum, and realizability is the one runtime task
that needs search, an $\exists$ over a trace in the $\Diff$ tier, escalating to $\Mod$ only with a
periodic operand.
\end{theorem}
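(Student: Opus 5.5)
The plan is to prove the three clauses of \cref{thm:static-runtime} separately, each by unfolding definitions and invoking \cref{thm:tiered}.

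\emph{Soundness.} Assume $C_1,C_2$ are in static conflict and take any exercise. There are two cases.
\begin{itemize}
\item \emph{Single-state policies.} Take a ground context $\hat\ctx$ with usage segments. Its values satisfy $\Phi$ by construction: $\diff(a_0,\now)$ is nonnegative because $a_0\leq\ctx.\trig\leq\now$. Metered time is at most elapsed time once the usage segments are assumed to lie within $[a_0,\now]$ and to be disjoint. Delay is at most elapsed time because $\trig\geq a_0$. Hence the value vector $(\sem{\ell}_{\hat\ctx})_\ell$ lies in $\sem{\Phi}$. If $\hat\ctx\models C_1$ and $\hat\ctx\models C_2$, then by \cref{def:runtime-eval} this vector also lies in $\sem{C_1}\cap\sem{C_2}$. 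That contradicts the emptiness $\sem{C_1}\cap\sem{C_2}\cap\sem{\Phi}=\emptyset$ from \cref{def:frame}.
\item \emph{\odrl{andSequence}.} A trace satisfying both policies gives satisfaction indices and timestamps (\cref{def:andseq}). Together with the bound triggers, these form a solution of the joint difference system, contradicting trace conflict (\cref{def:trace-conflict}).
\end{itemize}
The main obstacle is this step: showing that every ground context realises a point of $\sem{\Phi}$. It needs the implicit well-formedness assumptions on segments, which I would state explicitly.

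\emph{Refinement, not reversal.} This clause needs one witness. I would take the \cref{ex:bsb} request's $\odrl{elapsedTime}~\odrl{lteq}~\textsf{P10D}$ together with an offer bounding $\odrl{meteredTime}~\odrl{lteq}~\textsf{P5D}$. \Cref{thm:criterion} and \cref{thm:tiered} give no static conflict. A ground context with $a_0=\textsf{2027-06-01}$, $\now=\textsf{2027-06-09}$ and metered usage of $\textsf{P6D}$ violates the cap, so it is in runtime conflict (\cref{def:runtime-conflict}). The second half of this clause, that no $\hat\ctx$ overturns a static \Conflict, is just the contrapositive of soundness.

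\emph{Theory.} This clause is bookkeeping.
\begin{itemize}
\item Static conflict is decided by \cref{thm:tiered} by definition.
\item \Cref{def:runtime-eval} computes $\diff$ and a finite sum, and then performs membership tests in intervals. Membership in a periodic set is a divisibility check, so no search is needed.
\item Realizability (\cref{def:realizability}) existentially quantifies over trigger and now instants under order constraints and the bounds $\diff(\trig,\now)\bowtie r$, which is a difference system. By \cref{thm:tiered}, feasibility of such a system is the absence of a negative cycle. A $\odrl{timeInterval}$ operand adds a $\Per$ membership, which lifts the problem to the $\Mod$ tier via \cref{lem:windowed-rec}.
\end{itemize}
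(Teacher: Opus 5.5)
Your proposal is correct and follows the paper's proof clause by clause. Soundness goes by contradiction with the emptiness of $\sem{C_1}\cap\sem{C_2}\cap\sem{\Phi}$, or with infeasibility of the joint difference system for \odrl{andSequence}; refinement goes by a context failing one policy plus the contrapositive of soundness; and the theory clause goes by the tier classification and the fact that ground evaluation is substitution. You go further than the paper by actually checking that a ground context's value vector lies in $\sem{\Phi}$, which the paper's proof only asserts. You are right that this check needs well-formedness hypotheses: non-overlapping segments inside $[a_0,\now]$, and also $a_0\leq\trig\leq\now$, which \cref{def:runtime-eval} does not impose, so it is an assumption rather than ``by construction''. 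Your concrete metered-cap witness also makes the paper's non-constructive clause~(2) explicit.
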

\noindent
\begin{proof}
\emph{(1)} A satisfying $\hat\ctx$ lies in $\sem{C_1}\cap\sem{C_2}\cap\sem{\Phi}$; a
satisfying trace solves the joint difference system; both contradict emptiness.
\emph{(2)} A static non-conflict has a satisfying exercise, but a given $\hat\ctx$
need not be one and may fail $C_1$ or $C_2$; soundness of~(1) forbids the reverse.
\emph{(3)} By the tier classification and the fact that ground evaluation is
substitution.
\end{proof}
\begin{corollary}[Positional and metric in one operand]\label{cor:bridge}
Under \odrl{andSequence}, \odrl{delayPeriod} carries both a positional component, the
chosen trigger instant $\trig$, and a metric component, the duration comparison
$\diff(\trig,\now)$. It therefore cannot be encoded by order alone and lies in the $\Diff$ tier.
\end{corollary}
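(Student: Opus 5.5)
We need to prove Corollary \ref{cor:bridge}: under \odrl{andSequence}, \odrl{delayPeriod} has a positional component (the trigger $\trig$) and a metric component ($\diff(\trig,\now)$), so it cannot be encoded by order alone and lies in the $\Diff$ tier.

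The plan has two halves: a positive membership claim and a negative inexpressibility claim, both read off from \cref{def:andseq}, \cref{def:runtime-eval}, and \cref{def:tier}.

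\emph{Membership in $\Diff$.} By \cref{def:andseq}, the $k$-th step binds $\ctx_{j_k}.\trig=t_{S_{j_{k-1}}}$. By \cref{def:runtime-eval}, $\sem{\odrl{delayPeriod}}_\ctx=\diff(\trig,\now)$. A constraint $\odrl{delayPeriod}\bowtie r$ at step $k$ therefore becomes the atom $t_{S_{j_k}}-t_{S_{j_{k-1}}}\bowtie r$. Here both $t_{S_{j_k}}$ and $t_{S_{j_{k-1}}}$ are symbolic timepoints, and the earlier one is itself constrained by $C_{k-1}$. This is exactly a difference $\diff(s,t)$ over a constrained instant, so \cref{def:tier} assigns at least $\Diff$, or $\Mod$ if a periodic operand is present. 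The order constraints $j_{k-1}\le j_k$, made strict at positive-delay steps by \cref{lem:andseq-strict}, are also difference atoms of the form $t-s\ge 0$ or $t-s>0$. The whole system is therefore a simple temporal network, decidable by the negative-cycle test of \cref{thm:tiered}.

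\emph{Not encodable by order alone.} I would argue by a witness rather than a general inexpressibility theorem. An $\Ord$ system, by \cref{thm:tiered}, consists only of atoms bounding a single operand by a literal. Its satisfiability depends only on the relative order of the literals and variables, and is invariant under any order-preserving bijection of $\timeT$, i.e.\ any monotone rescaling.

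Take the unrealizable instance mentioned before the statement: a first constraint fixing a one-day \odrl{dateTime} window, followed by $\odrl{delayPeriod}~\odrl{gteq}~\textsf{P2D}$ inside that window. Stretching the timeline by a monotone map that sends the window to a three-day span preserves every order atom. It turns the infeasible STN into a feasible one, since the rescaled problem admits a gap of two days. Hence no order-only encoding can have the same satisfiable instances, and the metric difference is essential.

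\emph{Expected obstacle.} The delicate point is the invariance argument. The literal \textsf{P2D} is a duration, not an instant, so an order-only encoding must treat the delay as a separate duration variable bounded by literals, which severs its link to $\now$ and $\trig$. I would make this precise by noting that dropping the difference atom leaves each component satisfiable on its own, mirroring the metered/elapsed example. This shows that the coupling $\diff(\trig,\now)$ is the source of the conflict and is not captured in $\Ord$, which is all the corollary claims.
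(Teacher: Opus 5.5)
Your proposal is correct. The membership half is essentially the paper's own justification. The paper gives no separate proof; it relies on the closing paragraph of \cref{sec:sequences} and the $\Diff$ case of the proof of \cref{thm:tiered}. These encode the trigger as a timepoint and the delay as $\now-\trig$, which is your atom $t_{S_{j_k}}-t_{S_{j_{k-1}}}\bowtie r$, and the difference-over-a-constrained-instant clause of \cref{def:tier} places it in $\Diff$.

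Where you genuinely depart is on ``cannot be encoded by order alone.'' The paper reads this off syntactically: $\Ord$ is by definition the fragment whose atoms bound one operand by a literal, and $\diff(\trig,\now)$ relates two variable instants. You instead give a semantic non-expressibility argument. Satisfiability of order atoms over instants is invariant under order-preserving bijections of $\timeT$, yet the one-day and three-day window instances with the same \textsf{P2D} literal receive different verdicts.

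The paper's route is immediate but only shows that the natural encoding is not $\Ord$-shaped. Yours rules out other order-only encodings too, but only uniform ones: those that transform instant literals along with the timeline and leave the duration literal \textsf{P2D} fixed. Without that restriction, an encoding allowed to precompute the arithmetic could simply emit an unsatisfiable $\Ord$ system for the one-day instance, say $\odrl{dateTime}~\odrl{lt}~a$ together with $\odrl{dateTime}~\odrl{gt}~a$. Your ``expected obstacle'' paragraph effectively adopts this restriction; state it explicitly as the hypothesis of the invariance step.

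Two small precisions. The window must bind the second step as well; otherwise \textsf{P2D} can be met after the window closes and the instance is realizable. Your reading ``$\Diff$, or $\Mod$ when a periodic operand is present'' is the right one, matching the escalation clause of \cref{thm:static-runtime}.
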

\vspace{-10pt}
\subsection{Entailment and Policy Comparison}
\label{sec:refinement}
One constraint entails another when every execution satisfying the first also satisfies the
second. Entailment orders constraints by restrictiveness and is finer than compatibility, which
only checks whether two policies are jointly satisfiable.
\begin{definition}[Entailment]\label{def:refinement}
$C_1\refines C_2$ iff $\sem{C_1}\subseteq\sem{C_2}$.
\end{definition}
\noindent
On a totally ordered domain, entailment of atomic constraints reduces to a comparison of right
operands, for example $\odrl{elapsedTime}~\odrl{lteq}~\textsf{P5D}\refines
\odrl{elapsedTime}~\odrl{lteq}~\textsf{P30D}$.
\begin{lemma}[Syntactic entailment]\label{lem:refine-syntax}
On $\Inst$, $(\odrl{dateTime},\odrl{lt},r_1)\refines(\odrl{dateTime},\odrl{lt},r_2)$ iff
$r_1\leq_{\timeT}r_2$, and dually for $\odrl{gt}$. On $\Dur$, for
$\ell\in\{\odrl{elapsedTime},\odrl{meteredTime}\}$,
$(\ell,\odrl{lteq},r_1)\refines(\ell,\odrl{lteq},r_2)$ iff $r_1\leq_{\durD}r_2$, and for
$\odrl{delayPeriod}$,
$(\odrl{delayPeriod},\odrl{gteq},r_1)\refines(\odrl{delayPeriod},\odrl{gteq},r_2)$ iff
$r_1\geq_{\durD}r_2$. Hence $\refines$ between atomic constraints is decidable by one comparison
of right operands.
\end{lemma}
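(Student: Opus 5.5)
The plan is to compute the denotations from \cref{def:denotation} and compare the resulting intervals by set inclusion, since $\refines$ is inclusion of denotations (\cref{def:refinement}). For a single atomic constraint on an interval-valued operand $\ell$, the product denotation of \cref{def:product-denotation} has the constrained operand's interval on one axis and the full domain on every other axis. Inclusion of products with identical factors therefore reduces to inclusion on the constrained axis. For \odrl{dateTime} that axis is the instant axis $N(C)=W(C)$, because no \odrl{timeInterval} is present. So for all three families the first step is to reduce $C_1\refines C_2$ to $\sem{c_1}\subseteq\sem{c_2}$ as subsets of $D_\ell$.

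The case analysis is then direct.

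For \odrl{lt} on $\Inst$, $\sem{c_i}=(\bot,r_i)$. If $r_1\le r_2$, any $x<r_1$ satisfies $x<r_2$. Conversely, suppose $r_1>r_2$. Since $\timeT$ is affine over $\mathbb{Q}$, it is dense and unbounded, so there is some $x$ with $r_2\le x<r_1$. That $x$ lies in $\sem{c_1}\setminus\sem{c_2}$; in fact $x=r_2$ itself works. The \odrl{gt} case is the order dual.

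For \odrl{lteq} on $\Dur$, $\sem{c_i}=[0,r_i]$. If $r_1\le r_2$, inclusion holds. If $r_1>r_2$, then $r_1$ itself belongs to $[0,r_1]$ but not to $[0,r_2]$.

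For \odrl{gteq} on \odrl{delayPeriod}, $\sem{c_i}=[r_i,\top)$. Then $[r_1,\top)\subseteq[r_2,\top)$ iff $r_1\ge r_2$, with $r_1$ as the witness when $r_1<r_2$.

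The only step needing care is the converse direction in the strict, open-ended case: it needs a witness point, and that requires a value of $D_\ell$ in the gap. Choosing the right operand itself (here $r_2$) as the witness avoids any appeal to density.

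I also need to check that admissibility (\cref{def:admissible}) is not an obstacle. Every right operand $r_i$ lies in $D_\ell$, and every denotation involved is nonempty. The exception would be $(\bot,0)$ on $\durD$, but that case does not arise, since \odrl{lt} is considered here only on $\Inst$.

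Decidability then follows because each case is settled by one comparison in a total order. Those orders are decidable on the ISO-8601 values, which are normalised to UTC and fixed-length (\cref{def:structures}).
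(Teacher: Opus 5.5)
Your proof is correct and takes the same route as the paper. The paper simply notes that each clause is containment of one-sided intervals on a totally ordered domain, which comes down to a single comparison of right operands. You make explicit what it leaves implicit: the reduction of $\refines$ on product denotations to inclusion on the constrained axis (with $N(C)=W(C)$ for \odrl{dateTime}), and concrete witnesses for the converse directions ($r_2$ for \odrl{lt} and \odrl{gt}, $r_1$ for \odrl{lteq} and \odrl{gteq}), which indeed avoid any appeal to density.
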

\begin{proof}
Each clause computes interval containment on a totally ordered domain, which reduces to a single
comparison.
\end{proof}
\noindent
Deciding $\refines$ decides policy comparison at the constraint level. For atomic constraints
this is the single comparison of \cref{lem:refine-syntax}, which \cref{thm:tiered} classifies in
the $\Ord$ (EPR) tier. Entailment of sequenced policies does not reduce to a single comparison,
since it involves the \odrl{delayPeriod} difference. 
\section{Evaluation}
\label{sec:eval}
\begin{table}[t]
\centering
\footnotesize
\caption{The 72 conflict-detection problems by the solver that decides them (\SI{20}{\second}, single core).}
\label{tab:eval}
\begin{tabular}{@{}lp{4.6cm}rl@{}}
\toprule
Group & Categories & $n$ & Decided by \\
\midrule
Order (first-order)        & single-operand, conflict criterion, \odrl{or}, \odrl{xone}, unknown, refinement, completion, composition, capstone (verdict vector), realizability (window), sort ablation & 48 & all four \\
Arithmetic (Vampire + SMT) & cross-operand, capstone, runtime, periodic (compatible) & 14 & Vampire, Z3, cvc5 \\
Arithmetic (SMT only)      & periodic (conflict), sequence, realizability (network) & 10 & Z3, cvc5 \\
\midrule
Total                      & & 72 & \\
\bottomrule
\end{tabular}
\end{table}
\noindent\textbf{Setup.} We compile the semantics into a benchmark of 72
conflict-detection problems in 15 categories, each emitted as a TPTP~\cite{sutcliffe2017tptp} and an
SMT-LIB~\cite{BarFT-RR-25} problem with its expected verdict,\footnote{Each problem stores its
expected SZS status: a conflict as the conjecture $\forall x.\,\lnot(m_1(x)\land m_2(x))$ (status
\textsf{Theorem}), a compatibility as a witness $\exists x.\,(m_1(x)\land m_2(x))$, and a
three-valued case as an equation over the verdict algebra; the SMT-LIB side asserts the constraints,
and the harness maps \textsf{unsat}/\textsf{sat} onto these, counting a definite disagreement as a
wrong verdict and a \textsf{Timeout} as undecided.} together with its source ODRL policy in Turtle,
and decide them with the first-order provers Vampire~5.0.1~\cite{kovacs2013vampire},
E~3.3.2~\cite{schulz2019eprover} (first-order only), Z3~4.8.12~\cite{demoura2008z3}, and
cvc5~1.3.4~\cite{barbosa2022cvc5}, under a \SI{20}{\second} timeout on a single core. The duration, recurrence, cross-operand, and sequence categories are outside the scope of existing
ODRL engines by construction: they read \odrl{dateTime} alone, and the runtime evaluators among them
score one policy against observed usage rather than checking an offer against a request. 

\noindent\textbf{Results.} Every problem is decided by at least one reasoner and none returns a
wrong verdict; and since the two encodings are separate and the four provers agree, the verdicts
are properties of the semantics rather than of an encoding. Several problems isolate a single
mechanism: an unsorted reading reports a spurious conflict (ODRL869) that vanishes once each
operand is sorted (ODRL870); dropping $\mathit{metered}\le\mathit{elapsed}$ turns a cross-operand
conflict into compatibility (ODRL821); the same branches give Compatible under \odrl{or} but
Conflict under \odrl{xone} (ODRL846); two statically compatible caps conflict once runtime usage
exceeds the tighter one (ODRL840); a silent operand yields \emph{Unknown} that completion resolves
upward or downward (ODRL856, ODRL857);  the capstone problem, the running example, folds its four per-operand verdicts (Conflict,
Compatible, Unknown, Conflict) into one overall Conflict (ODRL863); and a windowed recurrence conflicts where two identical schedules
share no occurrence in the joint \odrl{dateTime} window (ODRL871). By backend (Q3,
\cref{tab:eval}), E decides exactly the 48 first-order problems, since the arithmetic fragments are
TFF over \texttt{\$int}, which it does not support; Vampire decides 62, those 48 plus 14 arithmetic
problems (the cross-operand, capstone, and runtime conflicts, each a single difference or
equality-pinned bound, and the two periodic-compatible cases); and Z3 and cvc5 decide all 72. The
10 Vampire misses are the periodic conflicts, whose disjointness is a divisibility argument, and
the sequence and network-realizability cases, whose satisfaction is the feasibility of a difference
system over a whole trace. \textbf{Discussion.} The benchmark confirms the semantics is faithfully mechanised, since
the verdicts are exactly those computed, and maps the reasoning each construct needs to a backend:
order conflicts fall to any prover, the cross-operand and bounded-arithmetic conflicts to Vampire,
and divisibility and trace-feasibility to the SMT solvers, which informs the choice of engine.
\section{Related Work}
\label{sec:related}
ODRL has been formalised in several ways: Steyskal and Polleres give an early
semantics over an abstract syntax~\cite{steyskal2015}, De~Vos et al.\ encode
policies in Answer Set Programming for compliance checking~\cite{de2019odrl},
and a W3C Community Group report is drafting a formal semantics for the
standard~\cite{odrlfs2026}. ODRL's own \odrl{conflict} property~\cite{odrl-model} and
GUCON~\cite{akaichi2023gucon} resolve clashes between permissions and
prohibitions; the joint satisfiability of two permissions studied here is a
separate question. The closest formal works are Salas et al.\ and Bonatti et al. Salas et al.\
compare a provider and a requester policy at negotiation time by overlap,
containment, and equivalence, later adding a size bound through
normalisation~\cite{salas2025evaluationcomparisonsemanticsodrl,salas2026normalisation}.
Bonatti et al.\ give a declarative semantics interpreting each operator against
its domain's ordering, and characterise comparison as trace containment, a
correctness criterion with no decision procedure~\cite{bonatti2025towards}.
Both treat time through \odrl{dateTime} alone, as a sortless ordered attribute,
so neither distinguishes instants from durations nor models duration
arithmetic, recurrence, or cross-operand relations, and both set
\odrl{andSequence} aside as indistinguishable from \odrl{and}. Our work
differs in kind: it distinguishes instants from durations, places each conflict
in a decidable tier, and gives \odrl{andSequence} a semantics. Axis-Aligned
Profile~\cite{mustafa2026axisalignedsemanticsodrlresolving} resolves the analogous spatial ambiguity by axis
decomposition and states the same semantics should extend to time, but does not
treat it; the temporal case is a sort problem, not an axis one, needing
stratification rather than decomposition. On the implementation side, ODRE, the ODRL Evaluator, and the EDC
connector~\cite{EDC2026} evaluate or match policies at runtime, returning a usage decision and a compliance report
respectively~\cite{cimmino2025open,slabbinck2025interoperable}, and Cano-Benito
et al.\ extend ODRE with OWL-Time relations~\cite{cano2024towards}. All compute
only \odrl{dateTime}, and the Evaluator treats \odrl{andSequence} as plain
\odrl{and}. None compares two policies: joint satisfiability of an offer and a
request over all temporal operands. The tiers reuse established procedures: Allen-style interval
reasoning~\cite{allen1983maintaining} for the order tier, the negative-cycle test of
simple temporal networks~\cite{dechter1991temporal} for the difference tier, and a
gcd-and-divisibility test in Presburger arithmetic~\cite{presburger} for the
modular tier. The boundary is deliberate: metric deadlines over dense time fall
under metric temporal logic, which is undecidable~\cite{ouaknine2005decidability}.
The contribution here is the decidability characterisation and the tiered
reduction, each tier then discharged by a decision procedure.

\section{Conclusions and Future Work}
\label{sec:conclusion}
We gave a sort-stratified semantics for ODRL's temporal constraints that types each operand as an
instant or a duration and reads the comparison operators per sort, removing the ambiguity that made
conflict detection unsound. Conflict between an offer and a request then reduces to interval
comparison under a background theory and a three-valued verdict, covering duration arithmetic,
recurrence, cross-operand relations, and \odrl{andSequence}. Each conflict falls into one of three
decidable tiers, negotiation-time and runtime conflict are tied by a soundness bridge, and a
benchmark on which four reasoners agree confirms the semantics is faithfully mechanised. \textbf{Future work.} The sort discipline extends to ODRL's other temporal operands,
positions within a media stream; the static semantics could drive an enforcement
engine that computes the duration and recurrence operands current ODRL engines omit; and a
machine-checked development(Isabelle/HOL) of the soundness results is in progress.
\newpage
\bibliographystyle{plain}
\bibliography{easychair}

\newpage
\appendix
\section{Proofs}\label{app:proofs}

\begin{proof}[Proof of \cref{thm:product-expressibility}] The shapes $[v,v]$, $(\bot,v]$, $[v,\top)$, $(\bot,v)$, $(v,\top)$ are denoted by
\odrl{eq}, \odrl{lteq}, \odrl{gteq}, \odrl{lt}, \odrl{gt}; the full domain by no
constraint; a two-sided interval by one lower and one upper constraint, which under \cref{def:admissible} is realizable only on \odrl{dateTime}; the duration operands admit one-sided or point intervals only. Each operator
is admissible for $\ell_k$ by hypothesis. Conjoining the per-operand counterparts
gives $C$. \end{proof}

\begin{proof}[Proof of \cref{thm:tiered}]
\emph{Ord.} Every atom bounds one operand by a literal, so on each shared operand
$S$ is unsatisfiable iff the two intervals are disjoint, decided by \cref{thm:criterion};
no cross-operand edge arises, an EPR problem.
\emph{Diff.} Encode each instant as a timepoint, a trigger-bound \odrl{delayPeriod} as
$\now-\trig$, and each accumulating duration as a value variable bounded by $\Phi$; every
atom is then a difference $x-y\bowtie c$, so $S$ is unsatisfiable iff its difference graph
has a negative cycle, in PTIME, which subsumes Ord since $\Ord\subseteq\Diff$.
\emph{Mod.} The duration axes are the difference system of the Diff case. On the instant
axis, by \cref{def:now-denotation} a satisfying $\now$ lies in
$(W_1\cap W_2)\cap(S_1\cap S_2)=W\cap(S_1\cap S_2)$; this is empty iff $S_1\cap S_2$ is empty
(incompatible cycles, \cref{lem:rec}), or $W$ is empty (infeasible difference/interval
system), or no occurrence of $S_1\cap S_2$ lies in $W$ (\cref{lem:windowed-rec}). A trace
satisfies both policies iff it meets every axis, so $S$ is satisfiable iff the duration
system has no negative cycle and the instant axis is nonempty; negating gives the stated
condition.
\emph{Soundness and relative completeness.} Each case decides $C_1\wedge C_2\wedge\Phi$
exactly, so a reported $\Conflict$ is a real conflict, and every conflict entailed by
$C_1\wedge C_2\wedge\Phi$ is reported.
\end{proof}

\begin{proof}[Proof of \cref{thm:unknown-sound}]
$\prodV=\min(V_{\mathsf{Inst}},\min_\ell V_\ell)=\Unknown$. For $\Compatible$, complete each operand a policy leaves
unconstrained with an $\odrl{eq}$ value in the other policy's interval there,
sharing a value where neither constrains it; every $V_\ell=\Compatible$, since
operands constrained by both are $\Compatible$ by hypothesis. For $\Conflict$, take $\ell$ constrained by exactly one policy $C_j$ with
$I_\ell(C_j)\neq D_\ell$, pick $v\in D_\ell\setminus I_\ell(C_j)$, add
$(\ell,\odrl{eq},v)$ to the policy lacking $\ell$, and complete every other
unconstrained operand as in the $\Compatible$ case; then
$\{v\}\cap I_\ell(C_j)=\emptyset$, so $V_\ell=\Conflict$ (\cref{thm:criterion}) and
$\prodV=\Conflict$ regardless of the others. The instant axis is completed the same way
through $N(\cdot)$ (\cref{def:now-denotation}): when only $C_j$ constrains it, add a
$(\odrl{dateTime},\odrl{eq},t)$ to the other policy with $t\in N(C_j)$ for $\Compatible$,
and, when $N(C_j)\neq\timeT$, with $t\notin N(C_j)$ (a non-occurrence of the recurrence if
$C_j$ constrains only \odrl{timeInterval}) for $\Conflict$; then $V_{\mathsf{Inst}}$ takes
the chosen verdict, so the $\Conflict$ witness may be this axis (with $N(C_j)\neq\timeT$) in
place of a duration operand.
\end{proof}

\begin{proof}[Proof of \cref{thm:composition-soundness}] For conjunction, some $V_\ell=\Conflict$ gives $I_\ell(C_1)\cap I_\ell(C_2)=\emptyset$, hence
$\sem{C_1}\cap\sem{C_2}=\emptyset$ by \cref{def:product-denotation}, since an empty factor empties
the product. For disjunction, every pair conflicts, so
$\big(\bigcup_i\sem{B_i}\big)\cap\big(\bigcup_j\sem{B_j}\big)
=\bigcup_{i,j}\big(\sem{B_i}\cap\sem{B_j}\big)=\emptyset$. For exclusive disjunction,
$\verdict_{\odrl{xone}}=\Conflict$ means
$\sem{\odrl{xone}(C_1,\dots,C_n)}\cap\sem{\odrl{xone}(C_1',\dots,C_m')}=\emptyset$, read directly
or, when every pair conflicts, because the \odrl{xone} sets lie inside the branch unions. Either
way the policies are disjoint. \end{proof}
\end{document}

%% file: fig-bsb-denotations.tex
%
\tikzset{
  >={Stealth[length=3.3pt]},
  mx/.style={gray!55,line width=0.45pt},                      
  mo/.style={cOfferText,line width=1.5pt},                    
  mr/.style={cRequestText,line width=1.5pt},                  
  co/.style={cOfferText,line width=1.1pt},                    
  cr/.style={cRequestText,line width=1.1pt},                  
  oep/.style={cOfferText,fill=cOfferText},                    
  rep/.style={cRequestText,fill=cRequestText},                
  ohole/.style={cOfferText,fill=white,line width=0.9pt},      
  mol/.style={font=\scriptsize,cOfferText,anchor=south,inner sep=0.6pt},
  mrl/.style={font=\scriptsize,cRequestText,anchor=north,inner sep=0.6pt},
}
\ifdefined\dotr\else\newlength{\dotr}\fi  \setlength{\dotr}{1.8pt}
\ifdefined\dnu\else\newlength{\dnu}\fi    
\colorlet{cOverlapBg}{green!15}           
\newcolumntype{C}{>{\centering\arraybackslash}X}  
\newcommand{\lrule}{\arrayrulecolor{black!18}\specialrule{0.35pt}{2.5pt}{2.5pt}\arrayrulecolor{black!40}}
\newcommand{\dnfit}{\setlength{\dnu}{\dimexpr\linewidth*10/39\relax}}
\newcommand{\dnDate}{\dnfit\begin{tikzpicture}[x=\dnu,y=1cm,baseline=-2.5pt]
  \draw[mx,->](0,0)--(3.7,0);
  \draw[mo,->](1.5,0.15)--(0.2,0.15);\draw[ohole](1.5,0.15)circle(\dotr);
  \node[mol]at(0.85,0.24){$<$\,\textsf{2026-12-31}};
  \draw[rep](2.85,-0.15)circle(\dotr);\node[mrl]at(2.8,-0.24){$=$\,\textsf{2027-06-01}};
\end{tikzpicture}}
\newcommand{\dnElapsed}{\dnfit\begin{tikzpicture}[x=\dnu,y=1cm,baseline=-2.5pt]
  \fill[cOverlapBg](0.2,-0.2)rectangle(1.05,0.2);
  \draw[mx,->](0,0)--(3.7,0);
  \draw[mo](0.2,0.15)--(3.0,0.15);\draw[oep](0.2,0.15)circle(\dotr);\draw[oep](3.0,0.15)circle(\dotr);
  \node[mol]at(1.6,0.24){$\le$\,\textsf{P30D}};
  \draw[mr](0.2,-0.15)--(1.05,-0.15);\draw[rep](0.2,-0.15)circle(\dotr);\draw[rep](1.05,-0.15)circle(\dotr);
  \node[mrl]at(0.62,-0.24){$\le$\,\textsf{P10D}};
\end{tikzpicture}}
\newcommand{\dnMetered}{\dnfit\begin{tikzpicture}[x=\dnu,y=1cm,baseline=-2.5pt]
  \draw[mx,->](0,0)--(3.7,0);
  \draw[oep](3.0,0.15)circle(\dotr);\node[mol]at(2.95,0.24){$=$\,\textsf{P30D}};
\node[font=\scriptsize\itshape,cRequestText,anchor=north,inner sep=0.6pt]at(0.62,-0.20){(none)};
\end{tikzpicture}}
\newcommand{\dnDelay}{\dnfit\begin{tikzpicture}[x=\dnu,y=1cm,baseline=-2.5pt]
  \draw[mx,->](0,0)--(3.7,0);
  \draw[mo,->](0.95,0.15)--(3.4,0.15);\draw[oep](0.95,0.15)circle(\dotr);
  \node[mol]at(2.0,0.24){$\ge$\,\textsf{P1D}};
  \draw[rep](0.2,-0.15)circle(\dotr);\node[mrl,anchor=north west]at(0.08,-0.24){$=$\,\textsf{PT0S}};
\end{tikzpicture}}
\newcommand{\dnInterval}{\dnfit\begin{tikzpicture}[x=\dnu,y=1cm,baseline=-2.5pt]
  \draw[mx,->](0,0)--(3.7,0);
  \foreach\x in{0.25,0.75,1.25,1.75,2.25,2.75,3.25}{\draw[co](\x,0)--(\x,0.17);}
  \node[mol,anchor=south west]at(0.1,0.2){$=$\,\textsf{P30D}};
  \foreach\x in{0.45,1.2,1.95,2.7,3.45}{\draw[cr](\x,-0.17)--(\x,0);}
  \node[mrl,anchor=north west]at(0.1,-0.2){$=$\,\textsf{P45D}};
\end{tikzpicture}}
\begin{table}[!t]
\centering
\setlength{\tabcolsep}{6pt}
\renewcommand{\arraystretch}{1.2}
\caption{Per-operand denotations and verdicts for the BSB--BnF example (\cref{ex:bsb}). For
each operand, the offer ($\offid{c_i}$, BSB) is drawn above the axis in
\textcolor{cOfferText}{blue} and the request ($\reqid{c'_i}$, BnF) below in
\textcolor{cRequestText}{amber}; filled endpoints are included, hollow ones excluded, an arrow
marks an unbounded ray, the shaded band is the overlap, and the \odrl{timeInterval} row shows
the recurrence combs. Verdict cells are red, green, or grey for \Conflict, \Compatible,
and~\Unknown.}
\label{tab:example-constraints}
{\arrayrulecolor{black!40}%
\begin{tabularx}{\linewidth}{@{} l l C c @{}}
\toprule
\textbf{Constraints} & \textbf{Temporal Operand} & \textbf{Denotation} & \textbf{Verdict}\\
\midrule
$\offid{c_1},\reqid{c'_1}$ & \odrl{dateTime} ($\Inst$) & \dnDate & \vConf\\
\lrule
$\offid{c_2},\reqid{c'_2}$ & \odrl{elapsedTime} ($\Dur$) & \dnElapsed & \vComp\\
\lrule
$\offid{c_3}$ & \odrl{meteredTime} ($\Dur$) & \dnMetered & \vUnk\\
\lrule
$\offid{c_4},\reqid{c'_4}$ & \odrl{delayPeriod} ($\Dur$) & \dnDelay & \vConf\\
\lrule
$\offid{c_5},\reqid{c'_5}$ & \odrl{timeInterval} ($\Dur$) & \dnInterval & \vConf\\
\bottomrule
\end{tabularx}}
\end{table}

%% file: pipe-line.tex
\usetikzlibrary{arrows.meta}
\usetikzlibrary{arrows.meta,decorations.pathreplacing}
\usetikzlibrary{positioning}
\usetikzlibrary{fit,backgrounds}          
\colorlet{cNegBg}{cyan!8}                 
\colorlet{cRtBg}{violet!8}                
\begin{figure}[t]
\centering
\footnotesize
\begin{tikzpicture}[
  >={Stealth[length=4pt]},
  node distance=5mm,
  stage/.style={draw=black!45,rounded corners=2pt,fill=black!4,
                inner xsep=6pt,inner ysep=4pt,align=center,
                text width=\dimexpr\linewidth-26pt\relax},
  chip/.style={draw=black!35,rounded corners=2pt,inner sep=3pt,
               minimum width=30mm,align=center},
  flow/.style={->,semithick,black!60},
  opl/.style={font=\scriptsize\itshape,color=black!60,fill=white,inner sep=1.5pt},
]
\node[stage] (in)
  {\textbf{Input.} Offer $\PolicyOffer$ and request $\PolicyReq$: sets of temporal constraints $(\ell,\bowtie,v)$};
\node[stage,below=of in] (sort)
  {\textbf{Step 1.} Sort typing: $\sortof(\ell)\in\{\Inst,\Dur\}$, admissible operators only \reff{Def.~\labelcref{def:sort},~\labelcref{def:admissible}}};
\node[stage,below=of sort] (den)
  {\textbf{Step 2.} Per-operand denotation: an interval $I_\ell$; for \odrl{timeInterval} a periodic instant set $\Per(a,p)$ \reff{Def.~\labelcref{def:denotation},~\labelcref{def:ti-rec}}};
\node[stage,below=of den] (reg)
  {\textbf{Step 3.} Joint region: the product $\textstyle\prod_\ell I_\ell$ against the background theory $\Phi$ (cross-operand) \reff{Def.~\labelcref{def:product-denotation},~\labelcref{def:frame}}};
\node[stage,below=of reg] (tier)
  {\textbf{Step 4.} Tier selects the procedure: Order bound compare (EPR)\, $\cdot$\, Difference negative cycle (PTIME)\, $\cdot$\, Modular divisibility (Presburger)\\[1pt]
   \reff{Def.~\labelcref{def:tier}, Thm.~\labelcref{thm:tiered}}};
\node[chip,fill=green!14,below=8mm of tier] (vp) {\Compatible};
\node[chip,fill=red!12,left=4mm of vp]  (vc) {\Conflict};
\node[chip,fill=black!8,right=4mm of vp] (vu) {\Unknown};
\node[stage,below=4mm of vp] (neg)
  {\textbf{\textbf{Step 6.} Negotiation step step.} \Conflict\ revise terms\, $\cdot$\, \Compatible\ form agreement\, $\cdot$\, \Unknown\ counter-offer or request more information};
\node[stage,densely dashed,fill=black!2,below=8mm of neg] (rt)
  {\textbf{\textbf{Step 7.} Runtime.} On a \Compatible\ agreement, each execution is
   checked against it: \odrl{elapsedTime}, \odrl{meteredTime}, \odrl{delayPeriod}
   computed from actual timestamps. A single execution may still conflict
   (static compatibility $\neq$ per-execution compliance).\\[1pt]
   \reff{Def.~\labelcref{def:runtime-conflict}, Thm.~\labelcref{thm:static-runtime}}};
\draw[flow] (in)   -- (sort);
\draw[flow] (sort) -- (den);
\draw[flow] (den)  -- (reg);
\draw[flow] (reg)  -- (tier);
\draw[flow] (tier.south) -- (vp.north)
  node[opl,midway]{\textbf{Step 5:} aggregate by $\min$\ \ ({\scriptsize Def.~\labelcref{def:product-verdict}})};
\draw[flow] (vp.south) -- (neg.north);
\draw[flow] (neg.south) -- (rt.north) node[opl,midway]{on agreement};
\begin{scope}[on background layer]
  \node[fill=cNegBg,rounded corners=3pt,inner sep=6pt,
        fit=(in)(sort)(den)(reg)(tier)(vc)(vp)(vu)(neg)] {};
  \node[fill=cRtBg,rounded corners=3pt,inner sep=6pt,fit=(rt)] {};
\end{scope}
\end{tikzpicture}
\caption{Pipeline of the temporal conflict-detection framework, from a policy pair to a
three-valued verdict and the negotiation step it triggers, and on a compatible agreement
to runtime; the two shaded bands separate the negotiation-time (static) and
runtime phases.}
\label{fig:pipeline}
\end{figure}